\numberwithin{equation}{section}
\providecommand{\norm}[2][\relax]{\left\|#2\right\|\ifx#1\relax\else_{#1}\fi}
\providecommand{\modulus}[2][\relax]{\left| #2 \right|\ifx#1\relax\else_{#1}\fi}
\providecommand{\oper}[1]{\mathcal{#1}}
\providecommand{\algebra}[1]{\ensuremath{\mathfrak{#1}}}
\providecommand{\Space}[3][]{\ifx#2R\ifx#1e \mathbb{C}^{#3} \else
\ifx#1p \mathbb{D}^{#3} \else
\ifx#1h \mathbb{O}^{#3} \else
\ifx#1\sigma \mathbb{A}\!^{#3} \else
\ensuremath{\mathbb{#2}^{#3}_{#1}{}} \fi \fi \fi \fi \else
\ensuremath{\mathbb{#2}^{#3}_{#1}{}} \fi}
\providecommand{\FSpace}[3][]{\ensuremath{\ifx#2l \ell_{#3}^{#1}{}\else
 \mathsf{#2}_{#3}^{#1}{}\fi}}
\providecommand{\scalar}[3][\relax]{\left\langle #2,#3
 \right\rangle\ifx#1\relax\else_{#1}\fi}
\providecommand{\SL}[1][2]{\ensuremath{\mathrm{SL}_{#1}(\Space{R}{})}}
\providecommand{\rmi}{\mathrm{i}}
\providecommand{\rme}{\mathrm{e}}
\providecommand{\rmd}{\mathrm{d}}
\providecommand{\linv}[2][\relax]{\mathfrak{L}^{#2}_{#1}}
\providecommand{\myhbar}{\hslash}
\providecommand{\map}[1]{\mathsf{#1}}
\providecommand{\uir}[3][0]{\ifcase #1{\rho^{#2}_{#3}}%
\or {\breve{\rho}^{#2}_{#3}}%
\or {\tilde{\rho}^{#2}_{#3}}\fi}
\newtheorem{thm}{Theorem}[section]
 \newtheorem{prop}[thm]{Proposition}
 \newtheorem{lem}[thm]{Lemma}
 \newtheorem{cor}[thm]{Corollary}
 \theoremstyle{definition}
 \newtheorem{defn}[thm]{Definition}
 \newtheorem{rem}[thm]{Remark}
 \newtheorem{Example}[thm]{Example}
\begin{document}
\allowdisplaybreaks

\newcommand{\arXivNumber}{2105.13811}

\renewcommand{\PaperNumber}{065}

\FirstPageHeading

\ShortArticleName{Tuning Co- and Contra-Variant Transforms: the Heisenberg Group Illustration}

\ArticleName{Tuning Co- and Contra-Variant Transforms:\\ the Heisenberg Group Illustration}
\Author{Amerah A. AL AMEER~$^{\rm a}$ and Vladimir V.~KISIL~$^{\rm b}$}

\AuthorNameForHeading{A.A.~Al Ameer and V.V.~Kisil}

\Address{$^{\rm a)}$~School of Science, Mathematics Department, University of Hafr Al Batin,\\
\hphantom{$^{\rm a)}$}~Hafr Al Batin 31991\, P.O Box 1803, Saudi Arabia}
\EmailD{\href{mailto:aamalameer@uhb.edu.sa}{aamalameer@uhb.edu.sa}}

\Address{$^{\rm b)}$~School of Mathematics, University of Leeds, Leeds LS2\,9JT, UK}
\EmailD{\href{mailto:V.Kisil@leeds.ac.uk}{V.Kisil@leeds.ac.uk}}
\URLaddressD{\url{https://www1.maths.leeds.ac.uk/~kisilv/}}

\ArticleDates{Received December 26, 2021, in final form August 26, 2022; Published online September 01, 2022}

\Abstract{We discuss a fine tuning of the \emph{co- and contra-variant transforms} through construction of specific fiducial and reconstructing vectors. The technique is illustrated on three different forms of induced representations of the Heisenberg group. The covariant transform provides intertwining operators between pairs of representations. In particular, we obtain the Zak transform as an induced covariant transform intertwining the Schr\"odinger representation on $\FSpace{L}{2}(\Space{R}{})$ and the lattice (nilmanifold) representation on $\FSpace{L}{2}\big(\Space{T}{2}\big)$. Induced covariant transforms in other pairs are Fock--Segal--Bargmann and theta transforms. Furthermore, we describe \emph{peelings} which map the group-theoretical induced representations to convenient representation spaces of analytic functions. Finally, we provide a condition which can be imposed on the reconstructing vector in order to obtain an intertwining operator from the induced contravariant transform.}

\Keywords{Heisenberg group; covariant transform; coherent states; Zak transform; Fock--Segal--Bargmann space}

\Classification{43A85; 47G10; 81R30}

\section{Introduction}

The purpose of this paper is to present an advanced use of the \emph{induced co- and contra-variant transform}, which is created by the Gilmore--Perelomov coherent states~(see \cite{Perelomov86} and \cite[Section~7.1]{AliAntGaz14a}). The transform is an intertwining operator to an induced representation, which explains our choice of the name for it. The approach is illustrated here by the crucial example of the Heisenberg group $\Space{H}{1}$, however the technique is not limited to this case, cf.~\cite{AlmalkiKisil19a,Kisil11c,Kisil12d}. The topics of coherent states and covariant transform (also known under many other names) are extensively covered in the existing literature, e.g., \cite[Section~13]{Kirillov76}, \cite[Appendix~V.2]{Kirillov04a}, and \cite{Berezin86,Grossmann85a,Grossmann86a,HernandezSikicWeissWilson11a, Kisil12b, Kisil17a,Perelomov86}, and we refer to authoritative surveys \cite{AliAntGaz14a,Folland89,HernandezLuthySikicSoriaWilson21a} for further references. Our purpose is to present some additional aspects which are commonly shadowed or missing in the existing sources. If these properties are explicitly stated then many known important results immediately follow as their direct corollaries.

For example, the standard induction (see \cite[Chapter~6]{Folland16a},
\cite[Appendix~V.2]{Kirillov04a} and \cite{BarbieriHernandezMayeli14a,HernandezLuthySikicSoriaWilson21a, KaniuthTaylor13a})  from a~character of the Heisenberg group centre gives the representation~\eqref{eq:H1-pol-rep-plane} below, which is different from the commonly used celebrated Fock--Segal--Bargmann (FSB) representation~\eqref{FSB-Rep} in the space of analytic function~\cite[Section~1.6]{Folland89}. Those two representations are linked by the \emph{peeling} map which transforms the annihilator of the representation space to the Cauchy--Riemann operator, see Section~\ref{sec: Peeling}. The origin of the annihilator operator is revealed as the Lie derivative with a special relation to the chosen fiducial vector, see Section~\ref{Covariant Transform}. Section~\ref{Contravariant Transform} provides the respective consideration of the reconstructing vector for the contravariant transform.

In this paper we present a machinery which allows to design fiducial and reconstructing vectors to ensure specific properties within the induced co- and contra-variant transforms. This technique does not rely on \emph{ad hoc} knowledge and allows one to get new insights even within the much-studied framework of the Heisenberg group \cite{BarbieriHernandezMayeli14a,HernandezSikisWeissWilson10a}. As an basic illustration, we apply this technique to find contra- and co-variant transforms, which intertwine the coordinate and momentum representations of $\Space{H}{1}$, and predictably obtain the Fourier transform and its inverse, see Example~\ref{ex:covariant-Fourier}. Less elementary example is the interpretation of the Zak transform as an induced covariant transform in Theorem~\ref{th:Zak}, which emerges as follows.

There are three forms of induction of representations of the Heisenberg group~\cite[Section~2.2]{Kirillov04a}: the \emph{left quasi-regular} representation, the \emph{Schr\"odinger} representation and the \emph{lattice} representation (see Section~\ref{sec:induc-repr-heis}) for details. We systematically and uniformly use the covariant transform for them. In particular,
 the Zak transform and its inverse are expressed as the covariant transform between the Schr\"odinger and lattice representations, with the \emph{Jacobi theta function} appearing as a vacuum state of the latter, see Theorem~\ref{th:Zak}.
 Similarly, expressing the \emph{pre-theta transform and its inverse} throughout the same technique is also new, see Theorems~\ref{theo_Thata} and~\ref{th:inverse-Zak-contravar}.
The pre-theta transform and its inverse intertwine the (pre-)FSB and the lattice representations. The \emph{$($pre-$)$ Fock--Segal--Bargamann $($FSB$)$ transform} and its inverse (see \cite[Section~4.2]{Neretin11a} and~\cite{Folland89}) intertwines the Schr\"odinger representation on $\FSpace{L}{2}(\Space{R}{})$ and left quasi-regular representation on $\FSpace{L}{2}\big(\Space{R}{2}\big)$. We name it as the FSB transform from quantum mechanics (see \cite[Section~4.2]{Neretin11a} and~\cite{Folland89}), it is also known as the Gabor or time-frequency transform or windowed Fourier transform of a~signal~\cite{Grochenig01a, Massopust14a}.

The classical Zak transform~\cite{Zak67a}, also known as Weil--Brezin transform~\cite[Section~1.10]{Folland89}, can be traced back to the works of Gelfand in 1950, see also \cite[Section~1.5]{Perelomov86}, \cite[Chapter~8]{Grochenig01a}, \cite[Section~8.1]{Neretin11a} and \cite{HernandezLuthySikicSoriaWilson21a,HernandezSikisWeissWilson10a,HernandezSikicWeissWilson11a} for further applications and historical notes.
 This transform is an isometric isomorphism from $\FSpace{L}{2}(\Space{R}{})$ onto $\FSpace{L}{2}\big(\Space{T}{2}\big)$ given, for $f\in\FSpace{L}{2}(\Space{R}{})$, by
 \begin{equation*}
	\big[\oper{\tilde Z}f\big](u,v)=\sum_{n\in\Space{Z}{}} f(u+n)\,\rme^{2\pi\rmi m n v}, \qquad (u,v)\in\Space{T}{2}.
	\end{equation*}
Weil~\cite{Weil64a} defined the abstract Zak transform on arbitrary locally compact abelian (LCA) groups with respect to arbitrary closed subgroups.
Subsequently, the Zak transform was reviewed and generalised by many authors, see for example~\cite{ArefijamaalGhaani13a,BarbieriHernandezPaternostro15a,HernandezLuthySikicSoriaWilson21a,HernandezSikicWeissWilson11a, Iverson19a}. Various connections between the Zak transform and (left) shift-invariant spaces were studied by several authors, cf.~\cite{BarbieriHernandezMayeli14a,HernandezSikisWeissWilson10a, Iverson15a,Iverson18a}, further references may be found in the recent survey paper~\cite{HernandezLuthySikicSoriaWilson21a}. Yet, an explicit interpretation of the Zak transform and its inverse as a co- and contravariant transforms respectively, cf.\ Theorems~\ref{th:Zak} and~\ref{th:inverse-Zak-contravar}, appears to be new in this paper.

\section[Preliminaries on the Heisenberg group and its induced representations]{Preliminaries on the Heisenberg group\\ and its induced representations}\label{sec:repr-homog-spac}

\subsection{The Heisenberg group and its Lie algebra}\label{sec:heosenberg-group-its}
The \emph{polarised Heisenberg group} $\Space[p]{H}{n}$ is the set of triples $(s,x,y)$, where $s\in\Space{R}{}$ and $x$, $y\in \Space{R}{n}$, with the group law given as follows~\cite[Section~1.2]{Folland89}
\begin{equation*} 
 (s,x,y)\cdot(s',x',y')=(s+s'+xy',x+x',y+y').
\end{equation*}
For the sake of simplicity in this paper, we will work with the one-dimensional case
 of $\Space[p]{H}{1}$ and call it the Heisenberg group. It is a non-commutative group and its centre is a one-dimensional subgroup
\begin{equation} \label{eq:heisenberg-centre}
 Z=\big\{(s,0,0)\in \Space[p]{H}{1}\colon s \in \Space{R}{}\big\} .
\end{equation}

The left action of $\Space[p]{H}{1}$ on itself is given by
 \begin{equation*}
\tilde\Lambda(g)\colon \ g'\mapsto g^{-1} g'.
\end{equation*}
 We extend this action to a linear representation
\begin{equation}
 \label{eq:left-right-regular}
 \Lambda(g)\colon \ F(g') \mapsto F\big(g^{-1}g'\big), \qquad g,g'\in\Space[p]{H}{1}
\end{equation}
on a certain linear space of functions on $\Space[p]{H}{1}$~\cite[Section~1.1]{Berndt07a}.
The Lebesgue measure ${\rm d}g={\rm d}s\,{\rm d}x\,{\rm d}y$ on
 $\Space[p]{H}{1}\sim\Space{R}{3}$ is a \emph{Haar measure} invariant under the left and right shifts. The action~\eqref{eq:left-right-regular} on the Hilbert space $\FSpace{L}{2}\big(\Space[p]{H}{1},{\rm d}g\big)$ of square-integrable functions on $\Space[p]{H}{1}$
is unitary~\cite[Appendix~V.2]{Kirillov04a} and is called the \emph{left regular} representation.

Here, we introduce
\begin{equation*}
	S=(1,0,0),\qquad X=(0,1,0),\qquad Y=(0,0,1),
\end{equation*}
which forms the basis of the Lie algebra $\algebra{h}_1$ of $\Space[p]{H}{1}$.
The commutator of ${X}$ and ${Y}$ is given by the celebrated \emph{Heisenberg commutation relation}
\begin{equation} \label{eq:heisenberg-commutator}
	[{X},{Y}]= {S}.
\end{equation}

It is common for a representation $\uir{}{}$ of a Lie group to pass to the derived representations $\rmd \uir{}{}$ of the respective Lie algebra (cf.~\cite[Chapter~2]{Kirillov04a}).
Consider the derived representations of $\algebra{h}_1$ spanned by
 \begin{equation*}
 \rmd\uir{S}{}=\frac{\rmd }{\rmd t} \uir{}{}\big(\rme^{t S}\big)\bigg|_{t=0},\qquad \rmd\uir{X}{}=\frac{\rmd }{\rmd t}\uir{}{}\big(\rme^{t X}\big)\bigg|_{t=0},\qquad \rmd\uir{Y}{}=\frac{\rmd }{\rmd t}\uir{}{}\big(\rme^{t Y}\big)\bigg|_{t=0}.
\end{equation*}

If $\uir{}{}$ is irreducible, $\rmd\uir{S}{}$ is a multiple of the identity operator $I$; that is, $\rmd\uir{S}{}=-\rmi\myhbar I$ (cf.~\cite[Section~2.2]{Kirillov04a}, \cite[Section~II.3]{Bohm93}).

The important operators produced by the derived representations of $\algebra{h}_1$ are \emph{ladder operators} (cf.\ \cite[Section~5.3]{Alamer19a}, \cite[Section~II.3]{Bohm93}, \cite[Section~2.2.1]{Kirillov04a}).
\begin{defn}
Let $\kappa>0$ be some fixed number and $\uir{}{}$ be a representation of $\Space[p]{H}{}$ such that $\uir{}{}(s,0,0)=\rme^{2\pi\rmi \myhbar s}I$. The \emph{ladder operators} are defined as follows
	\begin{equation*}
 {a^-}{}=\frac{1}{\sqrt{2\myhbar\kappa}}\big(\kappa \rmi\,\rmd\uir{X}{}- \rmd\uir{Y}{}\big),\qquad
 {a^+}{}=\frac{1}{\sqrt{2\myhbar\kappa}}\big(\kappa\rmi\,\rmd\uir{X}{} + \rmd\uir{Y}{}\big),
\qquad \myhbar>0 .
	\end{equation*}
 The operators $a^+$ and $a^-$ are known as the {\em creation} and {\em annihilation} operators, respectively.
 \end{defn}
In this paper, we fix the parameter $\kappa>0$ of the ladder operators $a^{\pm}$~\cite{AlmalkiKisil18a} and indicate the dependence of ladder operators upon it. The Heisenberg commutator relation~\eqref{eq:heisenberg-commutator} implies
\begin{equation*}
 [a^-, a^+]=a^-a^+-a^+a^-=I.
\end{equation*}
Also, for a unitary representation $\uir{}{}$, we have $\big(a^{-1}\big)^*=a^+$.

\begin{defn}[{\cite[Section~5.3]{Alamer19a} and \cite[Section~2.6]{Kirillov04a}}] \label{de: vacuum}
 In the above notations, a vector $\phi_0\in\FSpace{H}{}$ is called a {\em vacuum vector} if it is a null solution of the annihilation operator:
 \begin{displaymath}
 a^-\phi_0=0.
 \end{displaymath}
\end{defn}
We will need the following properties of the vacuum (see~\cite[Section~2.6]{Kirillov04a}):
\begin{lem} \label{lem:vacuum-basic}
 If $\phi_0$ is a vacuum of an irreducible representation $\uir{}{}$, then
 \begin{enumerate}\itemsep=0pt
 \item[$1.$] 
 $\phi_0$ is unique up to a scalar multiple.
 \item[$2.$] 
 For an intertwining operator $\oper{W}$ between $\uir{}{}$ and another representation $\uir{}{1}$, the image $\oper{W}\phi_0$ is a vacuum for $\uir{}{1}$.
 \end{enumerate}
\end{lem}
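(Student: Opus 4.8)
The plan is to treat the two assertions separately, relying throughout on the ladder relations $[a^-,a^+]=I$ and $a^-\phi_0=0$; I would dispose of item~2 first, as it is purely algebraic. Differentiating the intertwining identity $\oper{W}\uir{}{}(g)=\uir{}{1}(g)\oper{W}$ along the one-parameter subgroups $\rme^{tX}$ and $\rme^{tY}$ at $t=0$ gives $\oper{W}\,\rmd\uir{X}{}=\rmd\uir{X}{1}\,\oper{W}$ and $\oper{W}\,\rmd\uir{Y}{}=\rmd\uir{Y}{1}\,\oper{W}$, while restricting the same identity to the centre~\eqref{eq:heisenberg-centre} forces $\uir{}{}$ and $\uir{}{1}$ to share the parameter $\myhbar$. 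Since the annihilation operator is a fixed linear combination of $\rmd\uir{X}{}$ and $\rmd\uir{Y}{}$ with the common constant $\kappa$, it follows that $\oper{W}a^-=a_1^-\oper{W}$, and hence $a_1^-(\oper{W}\phi_0)=\oper{W}(a^-\phi_0)=0$, so $\oper{W}\phi_0$ is a vacuum for $\uir{}{1}$.

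For item~1 I would realise $\phi_0$ as the ground state of a number basis. Setting $\phi_n=(a^+)^n\phi_0$, the commutator relation together with $a^-\phi_0=0$ yields by induction $a^-(a^+)^n\phi_0=n\,(a^+)^{n-1}\phi_0$, so the linear span $V$ of $\{\phi_n\colon n\geq 0\}$ is stable under both $a^+$ and $a^-$. As $\rmd\uir{X}{}$ and $\rmd\uir{Y}{}$ are combinations of $a^\pm$ while $\rmd\uir{S}{}$ is scalar, $V$ is invariant under the derived representation of $\algebra{h}_1$, so its closure $\overline{V}$ is invariant under $\uir{}{}$; since $\phi_0\neq 0$, irreducibility forces $\overline{V}=\FSpace{H}{}$. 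A short computation with $(a^-)^*=a^+$ and the same commutator shows the $\phi_n$ are mutually orthogonal, so after normalisation they constitute an orthonormal basis of $\FSpace{H}{}$.

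Finally, let $\psi_0$ be an arbitrary vacuum. Using $(a^-)^*=a^+$ and $a^-\psi_0=0$ I would compute, for $n\geq 1$, that $\scalar{\psi_0}{\phi_n}=\scalar{(a^-)^n\psi_0}{\phi_0}=0$; thus $\psi_0$ is orthogonal to every basis vector except $\phi_0$ and is therefore a scalar multiple of $\phi_0$, as claimed. The only genuinely delicate step is the analytic passage from invariance of $\overline{V}$ under the Lie algebra to invariance under the group $\uir{}{}$: this is where irreducibility truly enters and where one must ensure that $\phi_0$ lies in a space of smooth (indeed analytic) vectors on which the exponential of the derived representation recovers $\uir{}{}$. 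Everything else is the routine ladder-operator bookkeeping sketched above.
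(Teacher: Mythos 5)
Your proof is correct in substance, and it is worth noting that the paper itself does not prove this lemma at all: it simply points the reader to \cite[Section~2.6]{Kirillov04a}, and the argument you give is precisely the canonical ladder-operator argument found there. Item~2 is fine as you present it: differentiating the intertwining relation along the one-parameter subgroups and matching the central character gives $\oper{W}a^-=a^-_1\oper{W}$, whence $a^-_1(\oper{W}\phi_0)=0$; the only caveat is that $\oper{W}\phi_0$ may vanish if $\oper{W}$ is not injective, which is harmless under the paper's Definition~2.2 (a vacuum is merely a null solution of $a^-$) but should be said if one wants a \emph{nonzero} vacuum downstream. For item~1, the Fock-basis construction $\phi_n=(a^+)^n\phi_0$, the induction $a^-(a^+)^n\phi_0=n(a^+)^{n-1}\phi_0$, the orthogonality via $(a^-)^*=a^+$, and the final computation $\scalar{\psi_0}{\phi_n}=\scalar{(a^-)^n\psi_0}{\phi_0}=0$ for $n\geq1$ are all correct. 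You rightly flag the one genuinely delicate point, namely passing from invariance of $\overline{V}$ under the derived representation to invariance under $\uir{}{}$; to close it one must know that $\phi_0$ is a smooth (better, analytic) vector and that $V$ is a core for the relevant operators --- for instance by observing that $\phi_0$ is an eigenvector of the number operator $a^+a^-$, whose eigenvectors are analytic vectors for the representation, or by deferring to the cited source as the authors do. With that caveat made explicit, your write-up supplies exactly the proof the paper outsources.
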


 \subsection{Induced representations of the Heisenberg group}\label{sec:induc-repr-heis}

 We are interested in three representations of $\Space[p]{H}{1}$, which are \emph{induced by characters} of certain subgroups of $\Space[p]{H}{1}$. Here, we briefly set the notations for this particular case of a general theory of representations induced in the sense of Mackey (the reader is referred to~\cite{Folland16a,KaniuthTaylor13a, Kirillov76} for detailed presentations).

Let $H$ be a subgroup of $\Space[p]{H}{1}$ and $\chi$ be a (complex unitary) character of $H$. Let $X=\Space[p]{H}{1}/H$ be the corresponding left $\Space[p]{H}{1}$-homogeneous space with the measure $d x$, which factorises the Haar measure ${\rm d}g={\rm d}x\,{\rm d}h$ for a Haar measure ${\rm d}h$ of $H$. We write $\FSpace[\chi]{L}{2}\big(\Space[p]{H}{1}\big)$ for the space of functions~$F(g)$ on~$\Space[p]{H}{1}$ having these properties:
\begin{enumerate}
\item[(C)] \emph{$H$-covariance}
 \begin{equation*}
 F(gh)=\bar{\chi}_\myhbar (h)F(g), \qquad \text{for all} \quad g\in \Space[p]{H}{1}, \quad h\in H
\end{equation*}
\item[(S)] \emph{$\FSpace{L}{2}$-summability} over $X$
\begin{equation*}
 \int_{X} \modulus{F(g)}^2\, {\rm d}x<\infty,
\end{equation*}
where the integral is meaningful since $\modulus{F(g)}=\modulus{F(gh)}$ for all $h\in H$ by $H$-covariance of~$F$.
\end{enumerate}
The space $\FSpace[\chi]{L}{2}\big(\Space[p]{H}{1}\big)$ is invariant under the left $\Space[p]{H}{1}$-shifts~\eqref{eq:left-right-regular} because the left and right shifts commute. An alternative realisation of the same representation is obtained from a given section $\map{s}\colon \Space[p]{H}{1}/H \rightarrow \Space[p]{H}{1}$ which is a right inverse for the quotient map $\map{p}\colon \Space[p]{H}{1} \rightarrow \Space[p]{H}{1}/H $; that is, \mbox{$\map{p}(\map{s}(w))=w$} for all $w\in \Space[p]{H}{1}/H $. For such a section $\map{s}$ and the character~$\chi$, we can define the lifting $\oper{L}_\chi\colon \FSpace{L}{2}\big(\Space[p]{H}{1}/H\big) \rightarrow \FSpace[\chi]{L}{2}\big(\Space[p]{H}{1}\big)$:
\begin{equation*} 
 [\oper{L}_\chi f](g) = \bar{\chi}(r) f(\map{p}(g)),\qquad \text{where} \quad r = (\map{s}(\map{p}(g)))^{-1} g.
\end{equation*}
Then, the lifting $\oper{L}_\chi $ intertwines a representation $\uir{}{\chi}$ on $\FSpace{L}{2}\big(\Space[p]{H}{1}/H\big) $ and the left regular representation $\Lambda$~\eqref{eq:left-right-regular} restricted to $\FSpace[\chi]{L}{2}\big(\Space[p]{H}{1}\big)$:
\[
\oper{L}_\chi \circ \uir{}{\chi}(g) = \Lambda(g)|_{\FSpace[\chi]{L}{2}(\Space[p]{H}{1})}\circ \oper{L}_\chi, \qquad \text{for all} \quad g \in \Space[p]{H}{1}.
\]
Combining the previous identities, we obtain an explicit expression for the induced representation
\begin{equation} \label{eq:def-ind}
 [\uir{}{\chi}(g) f](x)= \bar{\chi}\big(\map{r}\big(g^{-1} * \map{s}(x)\big)\big)
 f\big(g^{-1}\cdot x\big), \qquad \text{where} \quad f(x) \in \FSpace{L}{2}\big(\Space[p]{H}{1}/H\big).
\end{equation}

We now provide three forms of induced representations of $\Space[p]{H}{1}$ acting on Hilbert spaces of square-integrable functions on homogeneous spaces, where the inductions are performed from characters of subgroups~$H$ of $\Space[p]{H}{1}$ (cf.~\cite[Section~5.1]{Alamer19a}).
\begin{enumerate}\itemsep=0pt
\item For the centre $Z$~\eqref{eq:heisenberg-centre} of $\Space[p]{H}{1}$,
 the homogeneous space $\Space[p]{H}{1}/Z$ is isomorphic to $ \Space{R}{2}$.
 Any real number $\myhbar\neq 0$ defines
 the non-trivial character $\chi_\myhbar(s,0,0)={\rm e}^{2\pi \rmi \myhbar s}$ of $Z$.
 The respective maps are
 \begin{equation*} 
 \map{p}(s,x,y)= (x,y), \qquad
 \map{s}(x,y)= (0,x,y), \qquad
 \map{r}(s,x,y)= (s,0,0).
 \end{equation*}

 The representation $\Lambda_{\myhbar}$, which is induced from $\chi_\myhbar$ on the space of square-integrable functions $\FSpace{L}{2}\big(\Space{R}{2}\big)$, is given by
\begin{equation}\label{eq:H1-pol-rep-plane}
 	 [\Lambda_{\myhbar}(s,x,y)f] (x',y') = {\rm e}^{2\pi \rmi \myhbar (s+x(y'-y))}f(x'-x,y'-y).	
 \end{equation}
 This is called the \emph{left quasi-regular} representation. It is unitary reducible, and it can be decomposed into unitary irreducible components in many different ways (see~\cite[Section~6.4]{Alamer19a}). For physical (aka mathematical) reasons, the most popular irreducible module is the \emph{pre-Fock--Segal--Bargmann space} (\emph{pre-FSB space}):
\begin{equation}\label{pre-FSB}
	\FSpace{F}{\phi_{\myhbar\kappa}}\big(\Space{R}{2}\big)=\big\{{f}\colon \big(\kappa \rmi \linv{X}{}+ \linv{Y}{}\big) f = 0 \text{ and } f\in \FSpace{L}{2}(\Space{R}{}) \big\},
 \end{equation}
 where the \emph{Lie derivative} $\linv{A}=\rmd R_{\myhbar}^{A}$ is the derivation of the \emph{right quasi-regular} representation~\cite[Section~5.4.2]{Alamer19a} on $\FSpace{L}{2}\big(\Space{R}{2}\big)$:
 \begin{equation}\label{right-action-center-rep}
 [R_\myhbar(g)f](x',y')=\rme^{-2\pi\rmi\myhbar(s+x'y)}f(x'+x,y'+y).
 \end{equation}
 We will explain the origin of pre-FSB space in Section~\ref{Pre-Fock-Segal-Bargmann Transform}.

	\item For a two-dimensional maximal Abelian continuous subgroup of
 $\Space[p]{H}{1}$:
	\begin{equation*}
 H'_x=\big\{(s,0,y)\in\Space[p]{H}{1}\colon s,y\in\Space{R}{}\big\},
 \end{equation*}
 the homogeneous space $\Space[p]{H}{1}/ H'_x$ can be identified with $\Space{R}{}$.
 The respective maps are
 \begin{equation} \label{eq:maps-p-s-r-H_x-prime}
 \map{p}(s,x,y)=x, \qquad
 \map{s}(t) = (0,t,0),\qquad
 \map{r}(s,x,y) = (s-xy,0,y).
 \end{equation}
 The character $\chi_\myhbar(s,0,y)=\rme^{2\pi \rmi \myhbar s}$ of $ H'_x$ induces the representation $\uir{}{\myhbar}$ on a space of square-integrable functions, $\FSpace{L}{2}(\Space{R}{})$, is given by
 \begin{equation} \label{eq:H1-schroedinger-rep}
[\uir{}{\myhbar}(s,x,y)f](t)= \rme^{2\pi \rmi \myhbar
 (s-ty)} f(t-x).
 \end{equation}
 This is the celebrated \emph{Schr\"odinger} representation, which is a unitary irreducible representation on $\FSpace{L}{2}(\Space{R}{})$~\cite[Chapter~1]{Folland89}.

 There is another two-dimensional maximal Abelian continuous subgroup of
 $\Space[p]{H}{1}$:
 \begin{equation*} 
 H'_y=\big\{(s,x,0)\in\Space[p]{H}{1}\colon s,x\in\Space{R}{}\big\}.
 \end{equation*}
 Subgroups $H'_x$ and $H'_y$ are conjugated by the automorphism of $\map{i}\colon \Space[p]{H}{1} \rightarrow \Space[p]{H}{1}$:
 \begin{equation*} 
 \map{i}(s,x,y)=(s-xy,-y,x).
 \end{equation*}
 There exist small but important differences between the respective maps cf.~\eqref{eq:maps-p-s-r-H_x-prime}:
 \begin{equation} \label{eq:maps-p-s-r-H_y-prime}
 \map{p}(s,x,y)=y, \qquad
 \map{s}(\lambda ) = (0,0,\lambda ),\qquad
 \map{r}(s,x,y) = (s,x,0).
 \end{equation}
 The character $\chi_{\myhbar}(s,x,0)=\rme^{2\pi \rmi\myhbar s}$ of $H_y'$ induced an alternative form of the Schr\"odinger representation
\begin{equation}
 \label{eq:H1-schroedinger-rep-alt}
 \textstyle[\uir{\prime}{\myhbar}(s,x,y)f](\lambda )= \rme^{2\pi \rmi \myhbar
 (s+x(\lambda -y))}\, f(\lambda -y).
 \end{equation}
 In quantum mechanics, \eqref{eq:H1-schroedinger-rep} and~\eqref{eq:H1-schroedinger-rep-alt} serve as \emph{coordinate} and \emph{momentum} representations of~$\Space[p]{H}{1}$.
\item For a non-commutative discontinuous subgroup $H_d$ of $\Space[p]{H}{1}$:
	\begin{equation}\label{non-commuta}
H_d=\big\{(s, n, k):=(s ,n+\rmi k)\colon (n, k)\in \Space{Z}{2}, s\in\Space{R}{}\big\}
\end{equation}
 we consider a homogeneous space $\Space{T}{2}=\Space[p]{H}{1}/ H_d$.
 Let $\chi_m(s,0,0)={\rm e}^{2\pi \rmi m s}$ be a character of~$ H_d$.
 Let $f(u,v)$ be a function of $\FSpace{L}{2}\big(\Space{T}{2}\big)$. We have two possibilities to treat~$f$:
 \begin{enumerate}\itemsep=0pt
 \item $f$ is a square-integrable and double quasi-periodic function on $\Space{R}{2}$ (periodic in $u$ and quasi-periodic in~$v$), that is, for all $(n, k)\in \Space{Z}{2}$, we have
 \begin{equation} \label{eq: H_d-covariance property}
 f(u+n ,v+k)=\rme^{2\pi m\rmi u k}f(u,v);	
 \end{equation}
\item $f$ is a square-integrable function on the torus $\Space{T}{2}=\lbrace (u,v)\colon u, v\in[0,1) \rbrace$.
\end{enumerate}
If $f$ is considered as a double quasi-periodic function on $\Space{R}{2}$, the representation $\uir{}{m}$ induced from the character $\chi_m$ of $ H_d$ on $\FSpace{ L}{2}\big(\Space{T}{2}\big)$ is given by
\begin{equation} \label{induced-smooth-rep-from-H_d}
 [\uir{}{m}(s,x,y)f](u,v)=\rme^{2\pi m\rmi(s+x(v-y))}f(u-x,v-y).
\end{equation}
The representation $\uir{}{m}$ is unitary irreducible on $\FSpace{ L}{2}\big(\Space{T}{2}\big)$ and is called the \emph{lattice} representation~\cite{Cartier66a}.
The respective maps are
\begin{gather}
 \map{p}(s,x,y)= (\{x\},\{y\}), \qquad
 \map{s}(u,v)= (s,u,v), \nonumber\\
 \map{r}(s,x,y)= (s-\{x\}[y],[x],[y]),\label{eq:lattice-map-s-r}
\end{gather}
where $[x]$ is the integer part of $x$ and $\{x\}=x-[x]$ is the fractional part. Using them we express on $\Space{T}{2}$ as follows
\begin{align*}
 [\uir{m}{d}(s,x,y)f](u,v)&={\rm e}^{2\pi m\rmi (s+x\{v-y\}+u[v-y])} f(\{u-x\},\{v-y\})\nonumber\\
 &={\rm e}^{2\pi m\rmi (s+x(v-y)+(u-x)[v-y])} f(\{u-x\},\{v-y\}).
 \end{align*}
\end{enumerate}

For a unitary irreducible representation $\uir{}{}$ of $\Space[p]{H}{1}$, its restriction to the centre $Z$ is a character $\uir{}{}(s,0,0)= \rme^{2\pi \rmi \myhbar s}$, where the real parameter $\myhbar$ is known as the Planck constant in quantum mechanical contexts (see \cite[Section~2.2]{Kirillov94a}, \cite[Section~2.4.1]{Kirillov04a} and \cite[Section~1.3]{Folland89}) and provides a~mathematical tool to describe a~semi-classical limit. See also a meaningful assigning of physical units (see~\cite[Convention~2.1]{Kisil02e} and \cite[Remark~3.7]{AlmalkiKisil18a}). The theorem of {Stone--von Neumann}~\cite[Section~1.5]{Folland89} states that any two infinite-dimensional
irreducible strongly continuous unitary representations of~$\Space[p]{H}{1}$ with the same Planck constant are unitary equivalent.
 In this paper, motivated by the physical framework and for the sake of simplicity, we only consider the positive Planck constant~$\myhbar>0$.

\section[The covariant transform on H\_p\textasciicircum{}1]{The covariant transform on $\boldsymbol{\Space[p]{H}{1}}$}\label{Covariant Transform}

\subsection{An induced covariant transform}\label{sec:induc-covar-transf}

In this work we need an extended version of the covariant transform which covers the Banach space situation (see \cite[Section~2]{Kisil98a} and \cite{Kisil13a}). All representations in this paper are assumed to be strongly continuous.
\begin{defn}Let $\uir{}{}$ be a representation of a group in a space $V$, and $F$ be an operator from~$V$ to a~space~$U$. We define a \emph{covariant transform} $\oper{W}_F$ from $V$ to the space $L(G,U)$ of a $U$-valued function on $G$ by the formula
	\begin{equation}\label{gen-cov}
		\oper{W}_F\colon \ v\mapsto\tilde v=F\big(\uir{}{}\big(g^{-1}\big)v\big),\qquad v\in V,\quad g\in G.
	\end{equation}
\end{defn}
The fundamental property of the covariant transform $\oper{W}_F$~\eqref{gen-cov} is that $\oper{W}_F$ intertwines the representation $\uir{}{}$ and the \emph{left regular action} $\Lambda$ of $G$:{\samepage
\begin{equation} \label{eq:left-shift-itertwine}
 \oper{W}_F\circ \uir{}{}(g) =\Lambda(g)\circ \oper{W}_F \qquad\text{for all} \quad g\in G,
\end{equation}
where $\Lambda(g)\colon f(g') \mapsto f\big(g^{-1}g'\big)$, $g,g'\in G$.}

For the Gilmore--Perelomov coherent states (see \cite{Perelomov86} and \cite[Section~7.1.2]{AliAntGaz14a}), it is enough to have the covariant transform values on a homogeneous space rather than the entire group. We name it \emph{the induced covariant transform}~\cite{Kisil13a} due to its connections with induced representations (cf.~\eqref{intertwinging-irre-repr}). More specifically, for the Heisenberg group it is defined as follows.
Let $\uir{}{}$ be an irreducible unitary representation of $\Space[p]{H}{1}$ on a Hilbert space $\FSpace{H}{}$, and~$H$ be a closed subgroup of~$\Space[p]{H}{1}$. Let $X=\Space[p]{H}{1}/H$ be a homogeneous space. Let $\phi_0\in \FSpace{H}{}$ be a \emph{fiducial vector}, that is,
\begin{equation} \label{joint eigenvector}
		\uir{}{}(h)\phi_0=\chi(h) \phi_0, \qquad\text{for all} \quad h\in H,
	\end{equation}
for some character $\chi$ of $H$.
 The induced covariant transform $\oper{W}_{\phi_0}^{\uir{}{}}$ is a map from the Hilbert space $\FSpace{H}{}$ to a space $W(X)$ of functions on $X=\Space[p]{H}{1}/H$ given as follows
 \begin{equation} \label{eq:induce-wavelet-transform-1}
 	\oper{W}_{\phi_0}^{\uir{}{}}\colon \ f \mapsto \tilde{f}(\map{s}(x)) = \scalar{f}{\uir{}{}(\map{s}(x))\phi_0},\qquad x\in X,
\end{equation}
where $\map{s}\colon X\mapsto \Space[p]{H}{1}$ is a
Borel section (the right inverse of the natural projection $\map{p}\colon \Space[p]{H}{1}\rightarrow
\Space[p]{H}{1}/H$).

Note that, in the definition~\eqref{eq:induce-wavelet-transform-1} we use a linear functional $\phi_0\in V'$ as a special case of the operator $F\colon V\rightarrow \Space{C}{}$ in~\eqref{gen-cov}. Then, an adjusted notation of the covariant transform $\oper{W}_F$ will be~$\oper{W}_{\phi_0}^{\uir{}{}}$.

The main algebraic property of the induced covariant transform~\eqref{eq:induce-wavelet-transform-1} is that it
intertwines $\uir{}{}$ on $\FSpace{H}{}$ with a representation $\uir{}{\chi}$ on $W(X)$ induced by the character~$\chi$ of the subgroup~$H$. That is,
\begin{equation}\label{intertwinging-irre-repr}
	\uir{}{\chi}\circ \oper{W}_{\phi_0}^{\uir{}{}}=\oper{W}_{\phi_0}^{\uir{}{}}\circ \uir{}{}.
	\end{equation}
Alternatively, this can be observed by the fact that any function of the image of the induced covariant transform~\eqref{eq:induce-wavelet-transform-1} has the $H$-covariance property
\begin{equation}\label{eq:prove-H-covarianc}
	\tilde f(gh)=\bar \chi(h)\tilde f(g).
\end{equation}
The main analytic property of the induced covariant transform is formulated in terms of \emph{the matrix coefficient}:
 \begin{equation} \label{matrix coefficient}
 	\oper{W}(f,\phi)(x,y)=\scalar{f}{\rho (0,x,y)\phi}\, \qquad \text{for} \quad f, \phi\in \FSpace{H}{},
 \end{equation}
which is a continuous linear map $	\FSpace{H}{} \times\FSpace{H}{} \rightarrow \FSpace{L}{2}\big(\Space{R}{2}\big)$.
Moreover, the map~\eqref{matrix coefficient} is \emph{sesqui-unitary}, that is for all $f_1,\phi_1,f_2,\phi_2\in\FSpace{H}{}$,
 \begin{equation}\label{generaL-sesqui-unitary}
 \scalar{\oper{W}(f_1,\phi_1)}{\oper{W}(f_2,\phi_2)}_{\FSpace{L}{2}(\Space{R}{2}) }=\scalar{f_1}{f_2}_{\FSpace{H}{}}\overline{\scalar{\phi_1}{\phi_2}}_{\FSpace{H}{}}.
 \end{equation}

 \begin{Example}[the inverse Fourier transform] \label{ex:covariant-Fourier}
 To find the covariant transform which intertwines two forms of the Schr\"odinger representations~\eqref{eq:H1-schroedinger-rep} and~\eqref{eq:H1-schroedinger-rep-alt}, we shall take the fiducial vector which would be the eigenfunction for all representations
$\uir{}{\myhbar}(s,x,0)$~\eqref{eq:H1-schroedinger-rep}. That is,
 \begin{displaymath}
 [\uir{}{\myhbar}(s,x,0)f](t)
 = \rme^{2\pi \rmi \myhbar s}\, f(t-x)
 = \rme^{2\pi \rmi \myhbar (s)} f(t).
 \end{displaymath} Of course the only solution is the constant function $f(t)\equiv c$. Then the corresponding induced covariant transform based on the maps~\eqref{eq:maps-p-s-r-H_y-prime} is
 \begin{equation*} 
 [\oper{W} f](y) = c \int_{\Space{R}{}} f(t) \rme^{2\pi \rmi \myhbar y t} \,\rmd t.
 \end{equation*}
 Of course, this transformation is the (inverse) Fourier transform which intertwines the representations~\eqref{eq:H1-schroedinger-rep} and~\eqref{eq:H1-schroedinger-rep-alt}. The significance of this intertwining property for the harmonic analysis is revealed in the work of Howe~\cite{Howe80a}.
 \end{Example}

Although many functions can be taken as fiducial vectors, some of them turn out to be much more preferable. The origin of their advantages is revealed by the following observation~\cite[Section~5]{Kisil11c}. Let $G$ be a Lie group and $\uir{}{}$ be its representation in a Hilbert
space~$\FSpace{H}{}$. Let $[\oper{W}_\phi f](g)=\scalar{f}{\uir{}{}(g)\phi}$ be the
covariant transform defined by a fiducial vector $\phi\in \FSpace{H}{}$. Then, the covariant transform
intertwines \emph{right shifts} $R(g)\colon f(g') \mapsto f(g'g)$ on the group~$G$ with the
associated action $\uir{}{}$ on fiducial vectors
\begin{equation*} 
 R(g) \circ \oper{W}_\phi= \oper{W}_{\uir{}{}(g)\phi}.
\end{equation*}
There are many interesting applications of this simple observation~\cite{AlmalkiKisil18a,AlmalkiKisil19a, Kisil11c,Kisil12d,Kisil17a}, in particular,
\begin{prop}[\cite{Kisil11c,Kisil12d}] \label{co:cauchy-riemann-integ}
Let $G$ be a Lie group with a Lie algebra $\algebra{g}$ and $\uir{}{}$ be a representation of $G$ on a Hilbert space $\FSpace{L}{2}(\Space{R}{n})$. We denote the derived representation of $\uir{}{}$ by $\rmd\uir{X}{}$. Let $\phi$ be a~fiducial vector in the Schwartz space $\oper{S}(\Space{R}{n})$ such that
 $\big(\sum_{j=1}^n a_j\,\rmd\uir{X_j}{}\big)\phi=0$, for some $a_j\in\Space{C}{}$. Then, the image of the covariant transform consists of functions $f$ such that:
 \begin{equation*} 
 \Bigg(\sum_{j=1}^n\bar{a}_j \linv{X_j}\Bigg) \tilde f =0,
 \end{equation*}
 where $\linv{X}$ denotes the \emph{Lie derivative}~-- the derivation of the right regular representation~$R$ of~$G$:
 \begin{equation*} 
 \linv{X} = \left. \frac{\rmd R\big(\rme^{tX}\big)}{\rmd t} \right|_{t=0} , \qquad \text{for} \quad X\in\algebra{g}.
 \end{equation*}
\end{prop}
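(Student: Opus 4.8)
The plan is to derive the stated annihilation condition on the image directly from the transform rule $R(g)\circ \oper{W}_\phi = \oper{W}_{\uir{}{}(g)\phi}$ recorded just above, turning the infinitesimal version of that rule into the identity we need. Throughout I write $\tilde f = \oper{W}_\phi f$, so that $\tilde f(g)=\scalar{f}{\uir{}{}(g)\phi}$, and I treat one generator $X\in\algebra{g}$ at a time before assembling the prescribed linear combination.

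First I would compute how a single Lie derivative acts on $\tilde f$. Substituting $g=\rme^{tX}$ into $R(\rme^{tX})\circ \oper{W}_\phi = \oper{W}_{\uir{}{}(\rme^{tX})\phi}$ and differentiating in $t$ at $t=0$, the left-hand side becomes $\linv{X}\tilde f$ by the very definition of the Lie derivative as $\rmd R(\rme^{tX})/\rmd t|_{t=0}$. On the right-hand side I write $[\oper{W}_{\uir{}{}(\rme^{tX})\phi} f](g)=\scalar{f}{\uir{}{}(g)\,\uir{}{}(\rme^{tX})\phi}$ and pass $\rmd/\rmd t|_{t=0}$ through the pairing; since $\rmd/\rmd t|_{t=0}\,\uir{}{}(\rme^{tX})\phi=\rmd\uir{X}{}\phi$ is the derived representation, I expect to obtain
\begin{equation*}
  \linv{X}\tilde f(g) = \scalar{f}{\uir{}{}(g)\,\rmd\uir{X}{}\phi}.
\end{equation*}

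Next I would form the combination demanded by the statement and exploit the conjugate-linearity of $\scalar{\cdot}{\cdot}$ in its second slot, which is exactly what converts the coefficients $\bar a_j$ on the outside into $a_j$ on the inside:
\begin{align*}
  \Bigg(\sum_{j=1}^n \bar a_j \linv{X_j}\Bigg)\tilde f(g)
  &= \sum_{j=1}^n \bar a_j \scalar{f}{\uir{}{}(g)\,\rmd\uir{X_j}{}\phi}\\
  &= \scalar{f}{\uir{}{}(g)\Bigg(\sum_{j=1}^n a_j\,\rmd\uir{X_j}{}\Bigg)\phi}.
\end{align*}
The hypothesis $\big(\sum_{j} a_j\,\rmd\uir{X_j}{}\big)\phi=0$ then makes the right-hand side vanish for every $g$, which is precisely the asserted condition on the image. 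This also explains transparently why the conjugated coefficients $\bar a_j$ are forced in the conclusion: they are the only choice that lets one reabsorb the scalars into the second argument of the pairing.

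The step I expect to be the genuine obstacle is the analytic one: justifying the interchange of $\rmd/\rmd t|_{t=0}$ with the pairing $\scalar{f}{\cdot}$, i.e.\ differentiating the covariant transform under the inner product. This is where the Schwartz-space hypothesis $\phi\in\oper{S}(\Space{R}{n})$ together with strong continuity of $\uir{}{}$ is essential: it ensures that the orbit map $t\mapsto \uir{}{}(\rme^{tX})\phi$ is differentiable in the norm of $\FSpace{L}{2}(\Space{R}{n})$ with norm-derivative $\rmd\uir{X}{}\phi$, after which continuity of the inner product legitimises passing the derivative inside. The same norm-differentiability guarantees that $\tilde f$ is regular enough for $\linv{X}\tilde f$ to be computed pointwise, so no additional smoothness assumptions on $\tilde f$ are needed and the algebraic identity above holds verbatim.
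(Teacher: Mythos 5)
Your proposal is correct and follows exactly the route the paper sets up: the paper gives no separate proof (the proposition is cited from \cite{Kisil11c,Kisil12d}), but the preceding paragraph introduces the identity $R(g)\circ\oper{W}_\phi=\oper{W}_{\uir{}{}(g)\phi}$ precisely so that the proposition follows by differentiating it along one-parameter subgroups, which is what you do. Your handling of the conjugate-linearity in the second slot (turning $\bar a_j$ outside into $a_j$ inside) and your remark on where the Schwartz-space hypothesis enters are both accurate.
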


An induced covariant transform from a representation $\uir{}{}$ on a Banach space $\FSpace{B}{}$ can be defined in a similar fashion. Let $H$ be a closed subgroup of $\Space[p]{H}{1}$ and $X=\Space[p]{H}{1}/H$ be a homogeneous space.	 Consider a continuous section $\map{s}\colon \Space[p]{H}{1}/H\rightarrow \Space[p]{H}{1}$, which is the right inverse of the natural projection $\map{p}\colon \Space[p]{H}{1}\rightarrow
\Space[p]{H}{1}/H$. Denote by $B^*$ {the dual space of the Banach space} $B$ and $\uir{}{}^*$ the {adjoint operator to} $\uir{}{}$. Let $l_0\in B^*$ {be a~non-zero test function such that}
 \begin{equation*}
		\uir{}{}^*(h)l_0=\bar\chi(h) l_0, \qquad\text{for all} \quad h\in H,
	\end{equation*}
for some character $\chi$ of $H$.
{The induced covariant transform} $\oper{W}_{l_0}$ is defined as (see \cite[Section~2]{Kisil98a} and \cite{Kisil13a})
 	\begin{equation} 	\label{wavelet_B}
 	\tilde { v}(x) =[\oper{W}_{l_0}v](x)= \big\langle \rho\big(\map{s}(x)^{-1}\big)v,l_0\big\rangle=\scalar{v}{\uir{}{}^*(\map{s}(x))l_0}.
 	\end{equation}
Similarly to the Hilbert space case~\eqref{intertwinging-irre-repr}, the induced covariant transform~\eqref{wavelet_B}
intertwines $\uir{}{}$ on~$B$ {with a representation} $\uir{}{\chi}$ on~$W(X)$ {induced by the character} $\chi$ of the subgroup $H$~\cite[Section~2, Proposition~2.6]{Kisil98a}.

Thus the induced covariant transform can be used as
\begin{itemize}\itemsep=0pt
\item reproducing formulae on representation spaces, and
\item intertwining operators between different realisations of equivalent representations.
\end{itemize}
Both applications will be illustrated below.

\subsection{The (pre-)Fock--Segal--Bargmann transform}\label{Pre-Fock-Segal-Bargmann Transform}
We look for an induced covariant transform $ \oper{W}_\phi^{\uir{}{\myhbar}}\colon \FSpace{L}{2}(\Space{R}{}) \rightarrow \FSpace{L}{2}\big(\Space{R}{2}\big)$, which intertwines the Schr\"odinger representation~\eqref{eq:H1-schroedinger-rep} and the left quasi-regular representation restricted to an irreducible component of $\FSpace{L}{2}\big(\Space{R}{2}\big)$.
 In fact,
for the character
$\chi_\myhbar(s,0,0)= \rme^{2\pi\rmi\myhbar s}$ of the centre $Z=\big\{(s,0,0)\in\Space[p]{H}{1}\colon
s\in\Space{R}{}\big\}$, any vector
$\phi\in\FSpace{L}{2}(\Space{R}{})$ satisfies the following specialisation of~\eqref{joint eigenvector}
\begin{equation*}
	\uir{}{\myhbar}(s,0,0)
\phi=\chi_\myhbar(s,0,0) \phi,\qquad \text{for all} \quad (s,0,0)\in Z.
\end{equation*}
Let $\map{s}\colon \Space[p]{H}{1}/Z\rightarrow \Space[p]{H}{1}\colon (x,y)\mapsto (0,x,y)$ be a continuous section~\cite[Section~5.1.2]{Alamer19a}.
Thus, for all $f\in\FSpace{L}{2}(\Space{R}{})$, the induced covariant transform $\oper{W}_{\phi}^{\uir{}{\myhbar}}$ for any fiducial vector $\phi\in\FSpace{L}{2}(\Space{R}{})$ is
 \begin{equation*} 
 \big[\oper{W}_{\phi}^{\uir{}{\myhbar}}(f)\big](x,y)=\int_{\Space{R}{}} f(t) \rme^{2\pi\rmi\myhbar ty} \phi(t-x)\,\rmd t.
 \end{equation*}
The main properties of $\oper{W}_\phi^{\uir{}{\myhbar}}$
 follow from the general properties of the covariant transform.
\begin{rem}\label{properties-pre-transform}
 Let $\phi\in\FSpace{L}{2}(\Space{R}{})$ be a fiducial vector such that $\norm{\phi}=1$.
 The covariant transform $\oper{W}_\phi^{\uir{}{\myhbar}}\colon \FSpace{L}{2}(\Space{R}{}) \rightarrow \FSpace{L}{2}\big(\Space{R}{2}\big)$ is
 a unitary intertwining operator between the Schr\"odinger representation~$\uir{}{\myhbar}$ on~$\FSpace{L}{2}(\Space{R}{})$ and the left quasi-regular representation $\Lambda_{\myhbar}$ restricted on the image space
 \begin{equation*}
 F_\phi\big(\Space{R}{2}\big):=\big\{\oper{W}_{\phi}^{\uir{}{\myhbar }}(f)\colon f\in \FSpace{L}{2}(\Space{R}{})\big\}.
 \end{equation*}
 In particular, $\Lambda_{\myhbar}$ is an irreducible representation on $F_\phi\big(\Space{R}{2}\big)$.
\end{rem}
So far all fiducial vectors seem to be equally suitable, yet its is common to give the strong preference to a vacuum vector of the Schr\"odinger representation~-- the Gaussian
\begin{equation} \label{eq:Gaussina-defn}
 \phi_{\myhbar\kappa}(t)=2^{1/4}\rme^{-\frac{\pi\myhbar}{\kappa} t^2} .
\end{equation}
This preference is explained by Proposition~\ref{co:cauchy-riemann-integ}: since the Gaussian is a null-solution
 to the annihilation operator
 \begin{equation}\label{annihilation-schrodinger}
 	a^-_{\uir{}{\myhbar}}=\rmd\uir{\kappa X-\rmi Y}{\myhbar}=-2\pi\myhbar t-\kappa\partial_t ,
 \end{equation}
 the image space of the induced covariant transform $\oper{W}_{\phi_{\myhbar\kappa}}^{\uir{}{\myhbar}}$~\eqref{eq:induce-wavelet-transform-1} is annihilated by the Lie derivative~$\linv{\kappa X+\rmi Y}$ (the derived representation from the right regular action $R_\myhbar$). This allows to give an intrinsic characterisation of the image space.

 Explicitly, $\oper{W}_{\phi_{\myhbar\kappa}}^{\uir{}{\myhbar}}$ is given by
 \begin{align}
\tilde{f}(x,y)&:=\big[\oper{W}_{\phi_{\myhbar\kappa}}^{\uir{}{\myhbar}}f\big](x,y)=\scalar{f}{\uir{}{\myhbar}(0,x,y)\phi_{\myhbar\kappa}}
 \nonumber\\
 & =\left(\frac{\myhbar}{\kappa}\right)^{1/2}2^{1/4}\int_{\Space{R}{}} f(t) \rme^{2\pi\rmi\myhbar ty} \rme^{-\frac{\pi \myhbar}{\kappa} (t-x)^2}\,\rmd t,\label{eq:prefsb-transform}
 \end{align}
where the measure is renormalised by the factor $({\frac{\myhbar}{\kappa}})^{1/2}$.
For reasons explained here, we call it the pre-FSB transform (see \cite[Section~~4.2]{Neretin11a} and \cite[Section~1.6]{Folland89}) from $\FSpace{L}{2}(\Space{R}{})$ into the \emph{pre-FSB space} $\FSpace{F}{\phi_{\myhbar\kappa}}\big(\Space{R}{2}\big)$~\eqref{pre-FSB}.
The image space $\FSpace{F}{\phi_{\myhbar\kappa}}\big(\Space{R}{2}\big)$ is a subspace of square-integrable functions on~$\Space{R}{2}$. The left quasi-regular representation $\Lambda_{\myhbar}$ restricted on the pre-FSB space $\FSpace{F}{\phi_{\myhbar\kappa}}\big(\Space{R}{2}\big)$ is called the \emph{pre-FSB representation}.
The prefix ``pre-'' is removed by a unitary
operator~-- the \emph{peeling}, which will produce the \emph{FSB space} of analytic functions on $\Space{C}{}$ in Section~\ref{exm:peeling-rep-center-FSB}.

\subsection{The Zak transform}\label{exp:calculating-Zak}
 In this subsection, we derive the Zak transform (see \cite[Chapter~9]{Berndt07a}, \cite{Cartier66a} and \cite[Chapter~4]{Folland89}) as a particular case of the covariant transform. More specifically, we look for the induced covariant transform
 \begin{equation*}
 \oper{W}_{l_0}^{\uir{}{\myhbar }}\colon \ \FSpace{L}{2}(\Space{R}{})\rightarrow \FSpace{L}{2}\big(\Space{T}{2}\big),
 \end{equation*}
 which intertwines the Schr\"odinger representation~\eqref{eq:H1-schroedinger-rep} and the lattice representation~\eqref{induced-smooth-rep-from-H_d}.
For an integer $m$, let $\chi_m(s,n.k)=\rme^{2 \pi \rmi m s}$ be the character of the subgroup $H_d=\{(s,n,k)\colon s\in \Space{R}{}$, $n,k\in\Space{Z}{}\}$~\eqref{non-commuta} of $\Space[p]{H}{1}$.
A required fiducial vector $l_0$ shall satisfy~\eqref{joint eigenvector}, which in this setup becomes
\begin{equation}\label{satifying-con-zak}
 \rme^{2\pi \rmi \myhbar s} \rme^{-2\pi \rmi \myhbar t k} l_0(t-n)=\rme^{2\pi \rmi m s} l_0(t), \qquad \text{for all} \quad (s,n,k)\in H_d.
\end{equation}
 The left- and the right-hand sides of~\eqref{satifying-con-zak} are equal if and only if
 \begin{enumerate}\itemsep=0pt
 \item[1)] $\myhbar=m$ (from considering $(1,0,0)\in H_d$),
 \item[2)] the function $l_0$ is a periodic function (with the period $1$) (from considering $(0,1,0)\in H_d$), and
 \item[3)] $\mathrm{supp}(l_0)\subseteq\Space{Z}{}$ (from considering all $(0,0,k)\in H_d$).
 \end{enumerate}
 Of course, the last two conditions imply that $\mathrm{supp}(l_0)=\Space{Z}{}$.
The simplest non-zero vector $l_0$ satisfying~\eqref{satifying-con-zak} would be the
\emph{Dirac comb} distribution, that is,
\begin{equation} \label{eq:Dirac-comb}
 l_0(t)=\sum_{n\in\Space{Z}{}}\delta(t-n),
\end{equation}
which is a periodic distribution constructed from the \emph{Dirac delta} $\delta(t)$.

\begin{rem}{Let} $K$ {be a compact subset of} $\Space{R}{}$, and $\FSpace{C}{}(K)$ be the space of continuous functions on~$\Space{R}{}$ {supported in}~$K$, which is a Banach space equipped with the uniform norm. Let $\FSpace{C}{c}(\Space{R}{})$ {be the union of these Banach spaces, where} $\FSpace{C}{c}(\Space{R}{})$ {inherits a natural inductive limit topology}~\cite{Folland16a, Rudin91a}. We denote by $\FSpace[*]{C}{c}(\Space{R}{})$ {the space of all continuous functionals (\emph{pseudomeasures}) on} $\FSpace{C}{c}(\Space{R}{})$, which is
 the intersection of all duals of $\FSpace{C}{}(K)$. {As the Dirac comb} $l_0(t)$~\eqref{eq:Dirac-comb} is a finite measure in any compact set~$K$, thus $l_0\in \FSpace[*]{C}{c}(\Space{R}{})$ {is a pseudomeasure. Therefore, we
 consider the Schr\"odinger representation} $\uir{}{\myhbar}$ of $\Space[p]{H}{1}$ on $\FSpace{C}{c}(\Space{R}{})\subset \FSpace{L}{2}(\Space{R}{})$
{to be restricted to the Banach space} $\FSpace{C}{}(K)$, {for any compact subset} $K$ of $\Space{R}{}$.
\end{rem}

 \begin{thm}\label{th:Zak}
 	 Let $l_0$ be the fiducial vector defined by \eqref{eq:Dirac-comb}. For $f\in\FSpace{L}{2}(\Space{R}{})$, let
 \begin{equation*}
 [\oper{Z}f](u,v)=\rme^{2\pi \rmi m u v}\sum_{n\in\Space{Z}{}} f(u+n)\,\rme^{2\pi\rmi m n v}
\end{equation*}
be the so-called~{\rm \cite{Alamer19a}} \emph{co-Zak transform}.
Then, we have
\begin{equation*}
 \oper{Z}f=\oper{W}_{l_0}^{\uir{}{\myhbar }}(f).	
\end{equation*}
 \end{thm}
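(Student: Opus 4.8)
The plan is to evaluate the induced covariant transform $\oper{W}_{l_0}^{\uir{}{\myhbar}}$ explicitly on a section of $\Space{T}{2}=\Space[p]{H}{1}/H_d$ and to match it term by term with $\oper{Z}$. Working in the Banach-space realisation~\eqref{wavelet_B} with the section $\map{s}(u,v)=(0,u,v)$ of~\eqref{eq:lattice-map-s-r} (central coordinate normalised to zero), I would first record the group inverse: a direct application of the group law gives $\map{s}(u,v)^{-1}=(uv,-u,-v)$. Substituting this into the Schr\"odinger representation~\eqref{eq:H1-schroedinger-rep} then yields
\[
\big[\uir{}{\myhbar}\big(\map{s}(u,v)^{-1}\big)f\big](t)=\rme^{2\pi\rmi\myhbar(uv+tv)}\,f(t+u).
\]

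Next I would perform the duality pairing of this vector with the Dirac comb~\eqref{eq:Dirac-comb}. Acting as a pseudomeasure, $l_0=\sum_{n}\delta(\,\cdot\,-n)$ sends a function to the sum of its values at the integers, so
\[
\big[\oper{W}_{l_0}^{\uir{}{\myhbar}}f\big](u,v)=\sum_{n\in\Space{Z}{}}\rme^{2\pi\rmi\myhbar(uv+nv)}\,f(n+u)=\rme^{2\pi\rmi\myhbar uv}\sum_{n\in\Space{Z}{}}f(u+n)\,\rme^{2\pi\rmi\myhbar nv}.
\]
Imposing $\myhbar=m$, which is precisely the first of the three conditions on $l_0$ extracted from~\eqref{satifying-con-zak}, collapses the right-hand side onto $[\oper{Z}f](u,v)$, giving the desired identity.

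The only genuine subtlety, and the step I would treat most carefully, is that $l_0$ is a distribution rather than a vector of $\FSpace{L}{2}(\Space{R}{})$, so the pairing above must be justified. Here I would invoke the preceding Remark: restricting $\uir{}{\myhbar}$ to $\FSpace{C}{c}(\Space{R}{})$ (equivalently to each Banach space $\FSpace{C}{}(K)$), the operator $\uir{}{\myhbar}(\map{s}(u,v)^{-1})$ is a composition of a modulation, a shift and a central phase, hence maps $\FSpace{C}{c}(\Space{R}{})$ into itself; consequently $l_0$ acts on a compactly supported continuous function and the defining sum has only finitely many nonzero terms. This makes the computation rigorous for $f\in\FSpace{C}{c}(\Space{R}{})$. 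Since $\FSpace{C}{c}(\Space{R}{})$ is dense in $\FSpace{L}{2}(\Space{R}{})$ and $\oper{Z}$ is a modulated version of the isometric Zak transform onto $\FSpace{L}{2}\big(\Space{T}{2}\big)$, the identity extends by continuity to all $f\in\FSpace{L}{2}(\Space{R}{})$. Beyond this distributional bookkeeping I expect no obstacle: the phases match automatically, and the $H_d$-quasi-periodicity~\eqref{eq: H_d-covariance property} of the image is guaranteed in advance by the covariance property~\eqref{eq:prove-H-covarianc}.
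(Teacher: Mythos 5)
Your proposal is correct and follows essentially the same route as the paper: a direct evaluation of the induced covariant transform with the Dirac comb on $\FSpace{C}{c}(\Space{R}{})$, followed by extension by density; you merely use the form $\scalar{\uir{}{\myhbar}(\map{s}(u,v)^{-1})f}{l_0}$ of the pairing in~\eqref{wavelet_B}, whereas the paper writes the equivalent $\scalar{f}{\uir{}{\myhbar}(\map{s}(u,v))l_0}$ and integrates $f$ against the shifted comb, arriving at the identical sum $\rme^{2\pi\rmi m uv}\sum_{n}f(u+n)\rme^{2\pi\rmi m nv}$.
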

\begin{proof} Let $\map{s}\colon \Space{T}{2}\rightarrow \Space[p]{H}{1}\colon (u,v)\mapsto (0,u,v)$ be a continuous section (cf.~\cite[Section~5.1.3]{Alamer19a}). Let~$\FSpace{C}{c}(K)$ be the space of smooth functions that are compact support in $K$. For $f\in \FSpace{C}{c}(K)\subset\FSpace{L}{2}(\Space{R}{})$, we calculate the induced covariant transform as follows:
 \begin{align}
\nonumber
 \big[\oper{W}_{l_0}^{\uir{}{\myhbar }}(f)\big](u,v)
 &=\scalar{f}{\uir{}{\myhbar}(0,u,v )l_0}
\\
\nonumber
&=\int_{\Space{R}{}} f(t) \rme^{2\pi\rmi \myhbar t v } \bar l_0(t-u)\,\rmd t
\\
\nonumber
&=\int_{\Space{R}{}} f(t)\, \rme^{2\pi \rmi m t v}\sum_{n\in\Space{Z}{}}\delta(t-(u+n)) \rmd t,\qquad \myhbar=m
\\
\nonumber
&=\sum_{n\in\Space{Z}{}}\int_{\Space{R}{}} f(t) \rme^{2\pi \rmi m t v }\delta(t-(u+n))\,\rmd t
\\
\nonumber
&=\sum_{n\in\Space{Z}{}} f(u+n) \rme^{2\pi \rmi m v (u+n)}
 \\
\label{eq:covariant-Zak1}
 &=\rme^{2\pi \rmi m u v}\sum_{n\in\Space{Z}{}} f(u+n) \rme^{2\pi \rmi m v n}.
 \end{align}
 This is the Zak transform (see \cite[Chapter~9]{Berndt07a}, \cite{Cartier66a}, \cite[Chapter~4]{Folland89}) up to the factor $\rme^{2\pi \rmi m u v}$.
\end{proof}

\begin{rem}\label{extended_Zak}Since the co-Zak transform is defined on $\FSpace{C}{c}(\Space{R}{})$, {which is dense on} $\FSpace{L}{2}(\Space{R}{})$, the co-Zak transform can be extended to be defined on the entire $\FSpace{L}{2}(\Space{R}{})$.
\end{rem}

The general properties of the covariant transform $\oper{W}_{l_0}^{\uir{}{\myhbar }}$ yield corresponding properties of the Zak transform.
\begin{cor} \label{ZAk-proper}
 For $f\in\FSpace{L}{2}(\Space{R}{})$, let
 \begin{equation*}
 [\oper{Z}f](u,v)=\rme^{2\pi \rmi m u v}\sum_{n\in\Space{Z}{}} f(u+n) \rme^{2\pi\rmi m n v}
\end{equation*}
be the co-Zak transform~\eqref{eq:covariant-Zak1}.
Then, we have the following properties:
 \begin{enumerate}\itemsep=0pt
 \item[$1.$] 
 The operator $\oper{Z}\colon \FSpace{L}{2}(\Space{R}{})\rightarrow \FSpace{L}{2}\big(\Space{T}{2}\big)$ intertwines the Schr\"odinger representation $\uir{}{\myhbar}$ and the lattice representation $\uir{}{m}$. That is, $ \uir{}{m}\circ \oper{Z}=\oper{Z}\circ \uir{}{\myhbar}$, for $\myhbar=m$.

 \item[$2.$] 
 The operator $\oper{Z}\colon \FSpace{L}{2}(\Space{R}{})\rightarrow \FSpace{L}{2}\big(\Space{T}{2}\big)$ is unitary.

 \item[$3.$] 
 The image space of $[\oper{Z}f](u,v)$ consists of functions $\tilde f(u,v)$ that have the double-quasi-periodic property on $\Space{R}{2}$.
 \end{enumerate}
\end{cor}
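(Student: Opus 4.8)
The plan is to derive all three properties from the general machinery of the induced covariant transform, using Theorem~\ref{th:Zak} to identify $\oper{Z}$ with $\oper{W}_{l_0}^{\uir{}{\myhbar}}$. For part~$1$ I would invoke the intertwining property of the induced covariant transform. Since $l_0$ is a pseudomeasure rather than an $\FSpace{L}{2}$-vector, the applicable statement is the Banach-space version~\eqref{wavelet_B}, which intertwines $\uir{}{\myhbar}$ with the representation induced from the character $\chi_m$ of $H_d$; by the construction in Section~\ref{sec:induc-repr-heis} this induced representation is exactly the lattice representation $\uir{}{m}$. Thus, for $\myhbar=m$, the identity $\uir{}{m}\circ\oper{Z}=\oper{Z}\circ\uir{}{\myhbar}$ holds on the dense domain $\FSpace{C}{c}(\Space{R}{})$ and extends to all of $\FSpace{L}{2}(\Space{R}{})$ by continuity (Remark~\ref{extended_Zak}).

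For part~$3$ I would read the double-quasi-periodicity directly off the $H$-covariance~\eqref{eq:prove-H-covarianc} enjoyed by every function in the range of a covariant transform. Specialising $H=H_d$ and $\chi=\chi_m$, this covariance coincides with the defining relation~\eqref{eq: H_d-covariance property} for double-quasi-periodic functions on $\Space{R}{2}$, so $\oper{Z}f$ is double-quasi-periodic for every $f$.

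Part~$2$ is the step I expect to be the genuine obstacle. The sesqui-unitarity~\eqref{generaL-sesqui-unitary} and the normalised-vector statement of Remark~\ref{properties-pre-transform} both presuppose a fiducial vector lying in $\FSpace{L}{2}(\Space{R}{})$ with unit norm, whereas the Dirac comb $l_0$ carries no such norm, so the Hilbert-space machinery does not apply directly. I would instead establish isometry by a direct computation on the dense subspace $\FSpace{C}{c}(\Space{R}{})$: the unimodular factor $\rme^{2\pi\rmi m uv}$ leaves $\modulus{[\oper{Z}f](u,v)}$ unchanged, integrating in $v$ over $[0,1)$ collapses the double sum by orthogonality of $\{\rme^{2\pi\rmi m n v}\}_{n\in\Space{Z}{}}$ (this is where $m\neq 0$ enters), which yields $\sum_{n}\modulus{f(u+n)}^2$, and integrating in $u$ over $[0,1)$ reassembles this into $\norm{f}^2$. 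Density then upgrades this to an isometry on all of $\FSpace{L}{2}(\Space{R}{})$. Finally, to pass from isometry to unitarity I would use part~$1$: the range of $\oper{Z}$ is a nonzero closed $\uir{}{m}$-invariant subspace, and since $\uir{}{m}$ is irreducible on $\FSpace{L}{2}(\Space{T}{2})$ it must be the whole space, whence $\oper{Z}$ is onto and therefore unitary.
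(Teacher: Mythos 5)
Your parts 1 and 3 match the paper's own proof: the intertwining relation is quoted from the general property of the induced covariant transform (the paper invokes~\eqref{eq:left-shift-itertwine}), and the double-quasi-periodicity is read directly off the $H$-covariance~\eqref{eq:prove-H-covarianc} specialised to $H=H_d$, exactly as you do; you are in fact more scrupulous than the paper in flagging that $l_0$ is a pseudomeasure, so that the Banach-space version~\eqref{wavelet_B} together with density (Remark~\ref{extended_Zak}) is what actually applies. For part 2 you take a genuinely different route. The paper argues representation-theoretically: since $\oper{Z}$ intertwines two irreducible representations, Schur's lemma makes it a bijection and a scalar multiple of a unitary, and the scalar is pinned to $1$ by comparing norms on the single test function $\chi_{[0,1]}$, for which $\oper{W}_{l_0}^{\uir{}{\myhbar}}\chi_{[0,1]}$ is unimodular. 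You instead prove isometry by the direct Parseval computation on the dense subspace $\FSpace{C}{c}(\Space{R}{})$ --- orthogonality of $\rme^{2\pi\rmi m n v}$ collapsing the double sum, then integration in $u$ reassembling $\norm{f}^2$ --- and deduce surjectivity from the irreducibility of $\uir{}{m}$ together with the closedness and invariance of the range. Both arguments are sound. Yours is more self-contained and makes the normalisation transparent for every $f$ rather than for one test vector (and it is where the hypothesis $m\neq 0$ visibly enters, which the paper leaves implicit); the paper's is shorter and highlights the general mechanism that an intertwiner of irreducibles is automatically a multiple of a unitary.
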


\begin{proof}
1.~Since $\oper{W}_{l_0}^{\uir{}{\myhbar }}=\oper{Z}$, the intertwining property in~\eqref{eq:left-shift-itertwine}, for $\myhbar=m$, implies
 \begin{equation*}
 \uir{}{m}\circ \oper{W}_{l_0}^{\uir{}{\myhbar}}=\oper{W}_{l_0}^{\uir{}{\myhbar}}\circ \uir{}{\myhbar}.
 \end{equation*}

2.~Since $\oper{W}_{l_0}^{\uir{}{\myhbar }}$ is an intertwining operator between two irreducible representations of $\Space[p]{H}{1}$, by \emph{Schur's lemma}~\cite[Theorem~8.2.1]{Kirillov76}, the covariant transform $\oper{W}_{l_0}^{\uir{}{\myhbar }}$ is a bijection $\FSpace{L}{2}(\Space{R}{})\rightarrow \FSpace{L}{2}\big(\Space{T}{2}\big)$.
 ``When you have an intertwining operator between irreducible
representations, it has to be terrible not to be unitary''~\cite{Howe80a}, if a right factor is used. The scaling can be checked from the identity
\[
 \norm[\FSpace{L}{2}(\Space{R}{})]{\chi_{[0,1]}} = \big\|\oper{W}_{l_0}^{\uir{}{\myhbar }}\chi_{[0,1]}\big\|_{\FSpace{L}{2}(\Space{T}{2})}
\]
for the indicator function $\chi_{[0,1]}$ of the interval $[0,1]$ with $\oper{W}_{l_0}^{\uir{}{\myhbar }}\chi_{[0,1]}(u,v)=\rme^{2\pi \rmi m (u+1)v}$.

3.~The image space of the induced covariant transform $\tilde f= \oper{W}_{l_0}^{\uir{}{\myhbar }}(f)$ has the $H$-covariance property~\eqref{eq:prove-H-covarianc} $\tilde{f}(gh)= {\bar\chi}(h)\tilde{f}(g)$,
for all $ g \in G$ and $ h\in H$.
For the subgroup $H_d= \{(s, n, k)=(s ,n+\rmi k)\colon s\in\Space{R}{},\,n, k\in \Space{Z}{} \}$, it has exactly the double-quasi-periodic property~\eqref{eq: H_d-covariance property}:
 \begin{equation*}
 \tilde{f}(u+n ,v+k)=\rme^{2\pi m\rmi u k}\tilde{f}(u,v).\tag*{\qed}	
 \end{equation*}\renewcommand{\qed}{}
\end{proof}

\begin{Example}[the pre-theta function]
 It is natural to evaluate the covariant transform of the vacuum vector~-- Gaussian $\phi_{\myhbar\kappa}$~\eqref{eq:Gaussina-defn} with $\myhbar=m$:
 \begin{align}
 \Phi_{m\kappa}(u,v)
 &:= [{\oper{Z}}\phi_{ m \kappa}](u,v) = \rme^{\pi \kappa (3\omega^2-\bar{\omega}^2-2\omega \bar{\omega}) / (4m)} \sum_{n\in\Space{Z}{}}\rme^{-\frac{\pi m}{\kappa}n^2} \rme^{ 2\pi\rmi n \omega}\nonumber\\
 &=\rme^{\pi \kappa (3\omega^2-\bar{\omega}^2-2\omega\bar{\omega})/(4m)} \Theta_{m\kappa}\left( \omega, \frac{\rmi m }{\kappa }\right), \label{calculating-theta-vacuum}
 \end{align}
 where $\omega=m(v+\rmi u/\kappa )\in\Space{C}{}$ and $\Theta_{m\kappa}$ is the \emph{Jacobi theta function}. We will use the notation $\Phi_{m\kappa}(\omega ,\bar{\omega})=\Phi_{m\kappa}(u,v)$ as well.
 By Lemma~\ref{lem:vacuum-basic}(2) $\Phi_{m\kappa}$, is a vacuum of the lattice representation.
\end{Example}

\subsection{The (pre-)theta transform}\label{integration-transform-theta}
In the present subsection, we look for an intertwining operator
\begin{equation*}
\oper{W}_{\Phi}^{\uir{}{m}}\colon \ \FSpace{L}{2}\big(\Space{T}{2}\big)\rightarrow \FSpace{L}{2}\big(\Space{R}{2}\big)
\end{equation*}
 between the lattice representation and the left quasi-regular representation restricted to an irreducible component of $\FSpace{L}{2}\big(\Space{R}{2}\big)$.
Although the formula of the left quasi-regular representation~\eqref{eq:H1-pol-rep-plane}
looks very similar to the lattice representation's formula~\eqref{induced-smooth-rep-from-H_d}:
 they act on different spaces $\FSpace{L}{2}\big(\Space{R}{2}\big)$ and $\FSpace{L}{2}\big(\Space{T}{2}\big)$, respectively.

 Let $\chi_\myhbar(s,0,0)= \rme^{2\pi\rmi\myhbar s}$ be the character of the centre $Z\subset \Space[p]{H}{1}$.
As was already mentioned (Section~\ref{Pre-Fock-Segal-Bargmann Transform}), any vector
$\Phi\in\FSpace{L}{2}\big(\Space{T}{2}\big)$ satisfies version $\uir{}{m}(s,0,0)
\Phi=\chi_\myhbar(s,0,0) \Phi$ of~\eqref{joint eigenvector}, for all $(s,0,0)\in Z$ and $\myhbar=m$. Thus, the respective covariant transform $\oper{W}_{\Phi}$ intertwines the lattice and quasi-regular representations.

We may be more specific and request that $\oper{W}_{\Phi}$ map: $\FSpace{L}{2}\big(\Space{T}{2}\big)$ to the pre-FSB space $\FSpace{F}{2}\big(\Space{R}{2}\big)$.
As in the case of the pre-FSB transform, the vacuum $\Phi_{m\kappa}$~\eqref{calculating-theta-vacuum} shall be taken as the fiducial vector. Indeed, by Proposition~\ref{co:cauchy-riemann-integ}, the image space of $\oper{W}_{\Phi_{m\kappa}}^{\uir{}{m}}$ is annihilated by the right ladder $\rmi \kappa\linv{ X}+\linv{Y}$~\eqref{right-action-center-rep}.
\begin{thm}\label{theo_Thata}
	Let $\Phi_{m\kappa}(u,v)
 =\rme^{\pi \kappa (3\omega^2-\bar{\omega}^2-2\omega\bar{\omega})/(4m)} \Theta_{m\kappa}\big( \omega, \frac{\rmi m }{\kappa }\big)$ be a fiducial vector defined in \eqref{calculating-theta-vacuum}.
 Then, the induced covariant transform $\oper{W}_{\Phi_{m\kappa}}^{\uir{}{m}}\colon \FSpace{L}{2}\big(\Space{T}{2}\big)\rightarrow \FSpace{F}{2}\big(\Space{R}{2}\big)$ is as follows
 \begin{gather*}
 \big[\oper{W}_{\Phi_{m\kappa}}^{\uir{}{m}}(f)\big](x,y)
 =\int_{\Space{T}{2}} f(\omega )\, \rme^{\pi \kappa[2(\zeta-\bar{\zeta})[(\bar{\omega}-\bar{\zeta})+(\omega-\zeta)]+3 (\bar{\omega}-\bar{\zeta})^2-(\omega-\zeta)^2-2(\omega-\zeta)(\bar{\omega}-\bar{\zeta})]/(4 m)
}\\
\hphantom{\big[\oper{W}_{\Phi_{m\kappa}}^{\uir{}{m}}(f)\big](x,y)
 =\int_{\Space{T}{2}}}{}
\times
 \overline{\Theta_{m\kappa}\left(\omega -\zeta,\frac{\rmi m}{\kappa }\right)} \frac{\kappa\, \rmd \omega \wedge\rmd \bar{\omega}}{\rmi m }.
\end{gather*}
\end{thm}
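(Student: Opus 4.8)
The plan is to evaluate the induced covariant transform directly from its definition~\eqref{eq:induce-wavelet-transform-1}. Since the target representation is induced from the character $\chi_\myhbar$ of the centre $Z$ with $\myhbar=m$, I would reuse the section $\map{s}(x,y)=(0,x,y)$ of $\Space[p]{H}{1}/Z\cong\Space{R}{2}$ fixed in Section~\ref{Pre-Fock-Segal-Bargmann Transform}, so that
\[
\big[\oper{W}_{\Phi_{m\kappa}}^{\uir{}{m}}(f)\big](x,y)=\scalar{f}{\uir{}{m}(0,x,y)\Phi_{m\kappa}}=\int_{\Space{T}{2}} f(u,v)\,\overline{[\uir{}{m}(0,x,y)\Phi_{m\kappa}](u,v)}\,\rmd u\,\rmd v.
\]
Because the vacuum $\Phi_{m\kappa}$ is smooth and bounded on $\Space{T}{2}$, this pairing already converges for every $f\in\FSpace{L}{2}\big(\Space{T}{2}\big)$, so~-- in contrast with the Dirac-comb fiducial vector of Theorem~\ref{th:Zak}~-- no passage to a dense subspace is required and every manipulation below is on firm footing.

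First I would apply the lattice representation~\eqref{induced-smooth-rep-from-H_d} to the fiducial vector, obtaining $[\uir{}{m}(0,x,y)\Phi_{m\kappa}](u,v)=\rme^{2\pi m\rmi x(v-y)}\Phi_{m\kappa}(u-x,v-y)$, where the shifted values are read off from the explicit quasi-periodic formula~\eqref{calculating-theta-vacuum} (which, as one checks from the quasi-periodicity of the Jacobi theta, does extend $\Phi_{m\kappa}$ correctly off the fundamental domain). Then I would introduce the complex coordinates $\omega=m(v+\rmi u/\kappa)$ for the integration variable and $\zeta=m(y+\rmi x/\kappa)$ for the output point, so that the translation $(u,v)\mapsto(u-x,v-y)$ becomes $\omega\mapsto\omega-\zeta$ and the theta factor of $\Phi_{m\kappa}(u-x,v-y)$ is exactly $\Theta_{m\kappa}\big(\omega-\zeta,\frac{\rmi m}{\kappa}\big)$.

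Next comes the combination of exponents. Conjugating the Gaussian prefactor of $\Phi_{m\kappa}(u-x,v-y)$ turns $3(\omega-\zeta)^2-(\bar\omega-\bar\zeta)^2-2(\omega-\zeta)(\bar\omega-\bar\zeta)$ into $3(\bar\omega-\bar\zeta)^2-(\omega-\zeta)^2-2(\omega-\zeta)(\bar\omega-\bar\zeta)$, while the representation phase $\rme^{-2\pi m\rmi x(v-y)}$ is rewritten through $x=-\rmi\kappa(\zeta-\bar\zeta)/(2m)$ and $v-y=((\omega-\zeta)+(\bar\omega-\bar\zeta))/(2m)$ as a cross term proportional to $(\zeta-\bar\zeta)\big[(\bar\omega-\bar\zeta)+(\omega-\zeta)\big]$. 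Placing both over the common denominator $4m$ produces precisely the quadratic exponent displayed in the statement, and $\overline{\Theta_{m\kappa}(\omega-\zeta,\frac{\rmi m}{\kappa})}$ supplies the remaining factor. Finally I would convert the area element by $\rmd\omega\wedge\rmd\bar\omega=\frac{2\rmi m^2}{\kappa}\,\rmd u\wedge\rmd v$, equivalently $\kappa\,\rmd\omega\wedge\rmd\bar\omega/(\rmi m)=2m\,\rmd u\,\rmd v$, which recasts the integral into the stated complex form, the Jacobian $2m$ being the normalisation built into the displayed measure.

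To pin down the codomain I would not verify the defining differential equation by hand but appeal to the structural facts already in place: $\Phi_{m\kappa}=\oper{Z}\phi_{m\kappa}$ is a vacuum of the lattice representation (Lemma~\ref{lem:vacuum-basic}(2) applied to the Zak intertwiner of Theorem~\ref{th:Zak}, since the Gaussian $\phi_{m\kappa}$ is annihilated by the Schr\"odinger $a^-$). Proposition~\ref{co:cauchy-riemann-integ} then forces the image of $\oper{W}_{\Phi_{m\kappa}}^{\uir{}{m}}$ to be annihilated by the right ladder operator $\rmi\kappa\linv{X}+\linv{Y}$, which is exactly the operator defining the pre-FSB space~\eqref{pre-FSB}; this justifies writing the codomain as $\FSpace{F}{2}\big(\Space{R}{2}\big)$ and, since in the coordinate $\zeta$ this operator takes the weighted $\bar\partial$-form $2m\,\partial_{\bar\zeta}-\pi\kappa(\zeta-\bar\zeta)$, it also furnishes an independent check on the quadratic exponent obtained above. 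I expect the only real difficulty to be organisational~-- assembling the representation phase and the shifted, conjugated Gaussian prefactor of~\eqref{calculating-theta-vacuum} without dropping any of the factors of $\rmi$ introduced by the real-to-complex change of variables, so that all cross terms collapse to the compact expression $2(\zeta-\bar\zeta)[(\bar\omega-\bar\zeta)+(\omega-\zeta)]+3(\bar\omega-\bar\zeta)^2-(\omega-\zeta)^2-2(\omega-\zeta)(\bar\omega-\bar\zeta)$; no analytic obstacle arises, because the torus is compact and the kernel is smooth.
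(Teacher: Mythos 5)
Your proposal follows essentially the same route as the paper's own proof, which consists precisely of the direct evaluation $\scalar{f}{\uir{}{m}(0,x,y)\Phi_{m\kappa}}=\int_{\Space{T}{2}}f(u,v)\,\rme^{-2\pi\rmi m x(v-y)}\bar\Phi_{m\kappa}(u-x,v-y)\,\rmd u\,\rmd v$ followed by the substitution $\omega=m(v+\rmi u/\kappa)$, $\zeta=m(y+\rmi x/\kappa)$. The additional points you spell out~--- the Jacobian $\kappa\,\rmd\omega\wedge\rmd\bar\omega/(\rmi m)=2m\,\rmd u\,\rmd v$ matching the paper's ``up to normalisation'' caveat, and the identification of the codomain as the pre-FSB space via Lemma~\ref{lem:vacuum-basic} and Proposition~\ref{co:cauchy-riemann-integ}~--- are exactly what the paper records in the paragraph immediately preceding the theorem rather than inside the proof.
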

\begin{proof}The induced covariant transform $\oper{W}_{\Phi_{m\kappa}}^{\uir{}{m}}\colon \FSpace{L}{2}\big(\Space{T}{2}\big)\rightarrow \FSpace{F}{2}\big(\Space{R}{2}\big)$ is calculated (up to normalisation) as follows
\begin{gather}
 \big[\oper{W}_{\Phi_{m\kappa}}^{\uir{}{m}}(f)\big](x,y)
 =\scalar{f}{\uir{}{m}(\map{s}(x,y) )\Phi_{m\kappa}}\nonumber \\
 \hphantom{\big[\oper{W}_{\Phi_{m\kappa}}^{\uir{}{m}}(f)\big](x,y)}{}
 =\int_{\Space{T}{2}} f(u,v) \rme^{-2\pi\rmi m x(v-y)} \bar{\Phi}_{m\kappa}(u-x,v-y)\,\rmd u\,\rmd v\nonumber\\
\hphantom{\big[\oper{W}_{\Phi_{m\kappa}}^{\uir{}{m}}(f)\big](x,y)}{}
= f(\omega ) \rme^{\pi \kappa[2(\zeta-\bar{\zeta})[(\bar{\omega}-\bar{\zeta})+(\omega-\zeta)]+3 (\bar{\omega}-\bar{\zeta})^2-(\omega-\zeta)^2-2(\omega-\zeta)(\bar{\omega}-\bar{\zeta})]/(4 m)
}\nonumber\\
\hphantom{\big[\oper{W}_{\Phi_{m\kappa}}^{\uir{}{m}}(f)\big](x,y)=}{}
 \times
 \overline{\Theta_{m\kappa}\left(\omega -\zeta,\frac{\rmi m}{\kappa }\right)} \frac{\kappa\, \rmd \omega \wedge\rmd \bar{\omega}}{\rmi m },\label{theta-transform-1}
 \end{gather}
where $\omega=m(v+\rmi u/\kappa )$ and $\zeta=m(y+\rmi x/\kappa )$.
\end{proof}

 We call $\oper{W}_{\Phi_{m\kappa}}^{\uir{}{m}}$ the \emph{pre-theta transform}.

 The general properties of the covariant transform $\oper{W}_{\Phi_{m\kappa}}^{\uir{}{m}}$ yield corresponding properties of the pre-theta transform.
\begin{cor}The pre-theta transform $\oper{W}_{\Phi_{m\kappa}}^{\uir{}{m}}\colon \FSpace{L}{2}\big(\Space{T}{2}\big)\rightarrow \FSpace{F}{2}\big(\Space{R}{2}\big)$~\eqref{theta-transform-1} is a unitary intertwining operator between the lattice representation $\uir{}{m}$ on $\FSpace{L}{2}\big(\Space{T}{2}\big)$ and the left quasi-regular representation $\Lambda_{\myhbar}$ restricted to $F_{\Phi}\big(\Space{R}{2}\big)$.
	\end{cor}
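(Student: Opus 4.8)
The statement asserts two things: that $\oper{W}_{\Phi_{m\kappa}}^{\uir{}{m}}$ intertwines the lattice representation with the left quasi-regular representation, and that it is unitary. The plan is to obtain the intertwining for free from the general theory of the induced covariant transform, and then to reduce unitarity to the two unitarity results already established in this section, thereby avoiding any direct estimate on the unwieldy theta kernel of~\eqref{theta-transform-1}.

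First I would dispose of the intertwining property. By the preceding Example the vector $\Phi_{m\kappa}$ is a vacuum of the lattice representation; in particular it is an eigenvector of $\uir{}{m}(s,0,0)$ and so satisfies the fiducial condition~\eqref{joint eigenvector} for the character $\chi_m$ of the centre $Z$. Hence the general intertwining identity~\eqref{intertwinging-irre-repr} applies and shows that $\oper{W}_{\Phi_{m\kappa}}^{\uir{}{m}}$ intertwines $\uir{}{m}$ with the representation induced from $\chi_m$ on $Z$. Since the homogeneous space is $\Space[p]{H}{1}/Z\simeq\Space{R}{2}$ and the representation induced from $\chi_m$ (with $\myhbar=m$) is precisely the left quasi-regular representation $\Lambda_{\myhbar}$ of~\eqref{eq:H1-pol-rep-plane}, the identity $\Lambda_{\myhbar}\circ\oper{W}_{\Phi_{m\kappa}}^{\uir{}{m}}=\oper{W}_{\Phi_{m\kappa}}^{\uir{}{m}}\circ\uir{}{m}$ follows.

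For unitarity the cleanest route is to factor the pre-theta transform through the co-Zak and pre-FSB transforms. Because $\Phi_{m\kappa}=\oper{Z}\phi_{m\kappa}$ by~\eqref{calculating-theta-vacuum}, and because $\oper{Z}$ is unitary and intertwines $\uir{}{\myhbar}$ with $\uir{}{m}$ by Corollary~\ref{ZAk-proper}, evaluating the defining pairing on a vector $\oper{Z}g$ and using these two properties of $\oper{Z}$ gives, for every $g\in\FSpace{L}{2}(\Space{R}{})$,
\begin{align*}
\big[\oper{W}_{\Phi_{m\kappa}}^{\uir{}{m}}(\oper{Z}g)\big](x,y)
&=\scalar{\oper{Z}g}{\uir{}{m}(0,x,y)\oper{Z}\phi_{m\kappa}}\\
&=\scalar{g}{\uir{}{\myhbar}(0,x,y)\phi_{m\kappa}}
=\big[\oper{W}_{\phi_{m\kappa}}^{\uir{}{\myhbar}}g\big](x,y),
\end{align*}
that is, $\oper{W}_{\Phi_{m\kappa}}^{\uir{}{m}}=\oper{W}_{\phi_{m\kappa}}^{\uir{}{\myhbar}}\circ\oper{Z}^{-1}$. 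The pre-FSB transform $\oper{W}_{\phi_{m\kappa}}^{\uir{}{\myhbar}}$ is a unitary intertwining operator onto $\FSpace{F}{2}\big(\Space{R}{2}\big)$ by Remark~\ref{properties-pre-transform}, and $\oper{Z}^{-1}$ is unitary by Corollary~\ref{ZAk-proper}, so their composite is a unitary intertwining operator, which is exactly the claim. As an alternative internal to this subsection, one may invoke the sesqui-unitarity relation~\eqref{generaL-sesqui-unitary} with $\phi_1=\phi_2=\Phi_{m\kappa}$, which yields $\scalar{\oper{W}_{\Phi_{m\kappa}}^{\uir{}{m}}f_1}{\oper{W}_{\Phi_{m\kappa}}^{\uir{}{m}}f_2}=\norm{\Phi_{m\kappa}}^2\scalar{f_1}{f_2}$, so the transform is an isometry once $\Phi_{m\kappa}$ is normalised and is onto $F_{\Phi}\big(\Space{R}{2}\big)$ by definition of the image space.

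The step requiring the most care is not a calculation but the bookkeeping of the image space and the measure. One must confirm that $F_{\Phi}\big(\Space{R}{2}\big)$ coincides with the pre-FSB space $\FSpace{F}{2}\big(\Space{R}{2}\big)$ of~\eqref{pre-FSB}; this is where Proposition~\ref{co:cauchy-riemann-integ} enters, since $\Phi_{m\kappa}$ is annihilated by the relevant annihilation operator and therefore its covariant transform is annihilated by the conjugate Lie derivative $\rmi\kappa\linv{X}+\linv{Y}$, which pins down the image exactly. One must also check that the area form $\frac{\kappa}{\rmi m}\,\rmd\omega\wedge\rmd\bar{\omega}$ appearing in~\eqref{theta-transform-1} is the same measure on $\Space{R}{2}$ for which~\eqref{generaL-sesqui-unitary} holds and for which $\norm{\Phi_{m\kappa}}=1$; equivalently, that the normalisation of $\Phi_{m\kappa}$ is inherited from $\norm{\phi_{m\kappa}}=1$ through the unitary $\oper{Z}$. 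Granting these consistency checks, the factorisation closes the argument without ever touching the explicit theta kernel.
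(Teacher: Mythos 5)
Your proposal is correct, and it is worth noting that the paper itself offers no explicit proof of this corollary: it simply asserts that ``the general properties of the covariant transform yield corresponding properties,'' in parallel with Remark~\ref{properties-pre-transform} and the proof of Corollary~\ref{ZAk-proper}. Your secondary argument --- invoking the sesqui-unitarity relation~\eqref{generaL-sesqui-unitary} together with the general intertwining identity~\eqref{intertwinging-irre-repr} for the fiducial vector $\Phi_{m\kappa}$ and the character of the centre --- is exactly the route the paper has in mind. Your primary argument, the factorisation $\oper{W}_{\Phi_{m\kappa}}^{\uir{}{m}}=\oper{W}_{\phi_{m\kappa}}^{\uir{}{\myhbar}}\circ\oper{Z}^{-1}$ obtained from $\Phi_{m\kappa}=\oper{Z}\phi_{m\kappa}$~\eqref{calculating-theta-vacuum} and the unitarity and intertwining of $\oper{Z}$ from Corollary~\ref{ZAk-proper}, is a genuinely different and in some ways stronger argument: it reduces the corollary entirely to results already proved, it identifies the image space $F_{\Phi}\big(\Space{R}{2}\big)$ concretely as the pre-FSB space without a separate appeal to Proposition~\ref{co:cauchy-riemann-integ}, and it makes visible the commuting triangle of the three transforms of the paper. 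The one point you rightly flag but should not wave away is normalisation: the Gaussian $\phi_{m\kappa}$ has $\norm{\phi_{m\kappa}}^2=(\kappa/m)^{1/2}$, which is why the paper inserts the factor $(\myhbar/\kappa)^{1/2}$ into~\eqref{eq:prefsb-transform}; without matching that renormalisation (and the corresponding choice of measure $\frac{\kappa}{\rmi m}\,\rmd\omega\wedge\rmd\bar\omega$ in~\eqref{theta-transform-1}) the pre-theta transform is only a scalar multiple of a unitary, which is the same caveat the paper hides behind the phrase ``up to normalisation'' in the proof of Theorem~\ref{theo_Thata}.
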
	

\section[Peeling representations of H\_p\textasciicircum{}1 and analyticity]{Peeling representations of $\boldsymbol{\Space[p]{H}{1}}$ and analyticity}\label{sec: Peeling}

Induced representations are a common tool to construct representations of groups. However, a~representation prepared using a generic methodology may not be particularly suited for a~special situation. It often needs to be tuned to be enriched with useful features. In this section, we demonstrate such a simple tool which produces the required enhancements needed for various situations.

For example: since the annihilation operator $a^{-}$ provides a useful characterisation of an irreducible component of a representation, we are interested in expressing $a^{-}$ in the most transparent form. A map, called here \emph{peeling}, then simplifies the corresponding annihilation operator into a~linear combination of first-order derivatives only.
Therefore, the structure of the eigenvectors~$\phi_n$ (cf.\ \cite[Section~5.4]{Alamer19a} and \cite[Section~2.6]{Kirillov04a}) forms an orthonormal basis of the initial irreducible space~$\FSpace{H}{}$ becomes more transparent.

On the other hand, if a representation is reducible, its irreducible component can be characterised as the space of null-solutions for the Lie derivative (cf.~\eqref{pre-FSB}). In such cases, we seek to peel the irreducible component to a space of analytic functions. This allows one to use the power of complex variable theory to study the induced representations of~$\Space[p]{H}{1}$.

\begin{defn}A \emph{peeling} $\varepsilon_d$ is an invertible operator of multiplication defined by a function~$d(x)$ on~$X$:
\begin{equation*}
	\varepsilon_d\colon \ f(x)\mapsto {\rm e}^{d(x)} f(x),
\end{equation*}
The operator $\varepsilon_d$ is unitary for suitably related measures
 \begin{equation*}
 	\varepsilon_d\colon \ \FSpace{L}{2}(X,\rmd\mu(x))\rightarrow \FSpace{L}{2}( X, \rmd\nu(x))
 \end{equation*}	
 such that $\rmd\nu(x)={\rm e}^{- 2\operatorname{Re} d(x)}\rmd\mu(x)$.
 \end{defn}
We use such peeling operators to improve some properties of covariant transform related to specific representations.
In this paper, all considered peelings use smooth $d(x)$ on a
domain $X$ in a Euclidean space.
We will discuss the choice of~$d(x)$ for the pre-FSB, Schr\"odinger and lattice representations in Sections~\ref{exm:peeling-rep-center-FSB},~\ref{peeling shrodinger} and~\ref{sub: peeling-Lattice}, respectively.

\subsection{Peeling the (pre-)FSB representation}\label{exm:peeling-rep-center-FSB}
Let $\Lambda_{\myhbar}$ be the pre-FSB representation~\eqref{eq:H1-pol-rep-plane}, which acts irreducibly on the pre-FSB space~$\!\FSpace{F}{\!\phi_{\myhbar\kappa}}\!\!$~\eqref{pre-FSB}.
 Consider the variables $z, \bar{z}\in\Space{C}{}$, where $z=\sqrt{\frac{h}{2\kappa}}(x+\rmi\kappa y)$ and $h=2\pi\myhbar>0$. 	
 In this subsection, we peel the representation $\Lambda_{\myhbar}$ into the corresponding one $\tilde \Lambda_{\myhbar}$ that acts
on the FSB space of analytic functions. To perform this, we look for a peeling operator satisfying the following conditions:
\begin{enumerate}\itemsep=0pt
\item 
The peeling defined by $\rme^{d(z, \bar{z})}$ shall intertwine the right annihilation operator $\linv{\kappa X+\rmi Y} =2\pi\myhbar x+(\kappa \partial_{x}+\rmi \partial_{y})$ and the Cauchy--Riemann operators $\partial_{\bar z}=\kappa \partial_{x}+\rmi\partial_{y}$:
 \begin{equation} \label{intertwining with lie derivative}
	\partial_{\bar{z}} \rme^{d(z, \bar{z})}f(z,\bar {z})=\rme^{d(z, \bar{z})}\linv{\kappa X+\rmi Y}f(z, \bar{z}).
\end{equation}
A simple differential equation for~\eqref{intertwining with lie derivative} implies that
\begin{equation} \label{eq:FSB-peeling-first}
 d(z,\bar {z})=\tilde\psi(z) + \frac{1}{4}(z+\bar {z})^2,
\end{equation}
 where $\tilde\psi$ is an arbitrary smooth function of $z$ alone.
\item 
Let $a^-_{\Lambda_{\myhbar}}=\rmd\Lambda_{\myhbar}^{\kappa X-\rmi Y}=2\pi\rmi\myhbar\kappa y-(\kappa \partial_{x}-\rmi\partial_{y})$ be the left annihilation operator of $\Lambda_{\myhbar}$~\eqref{eq:H1-pol-rep-plane}. The same peeling shall intertwine $a^-_{\Lambda_{\myhbar}}$ with (a multiple of) the complex derivative $\partial_{z}=(\kappa \partial_{x}-\rmi\partial_{y})$. This fixes $\tilde\psi(z)=-\frac{1}{2}z^2-c$, $c\in\Space{C}{}$ in~\eqref{eq:FSB-peeling-first}. Thus, the peeling operator becomes
 \begin{gather} \label{unique-FSB-peeling}
 \varepsilon_d\cdot I= \rme^{d(z, \bar z)}\cdot I=\rme^{\frac{h}{4\kappa}(x^2+\kappa^2 y^2-2\rmi\kappa x y)-c }\cdot I
=\rme^{\frac{1}{4}(\bar{z}^2-z^2+2z\bar{z})-c}\cdot I.
 \end{gather}
\end{enumerate}
Let $c_0\in\Space{C}{}$ and $c_0\neq0$. There is a special vacuum of the representation $\Lambda_\myhbar$ annihilated by both~ $a^-_{\Lambda_\myhbar}$ and $\linv{\kappa X+\rmi Y}$, given as follows
\begin{equation*}
 \phi_0(z,\bar {z}) =c_0\rme^{-\frac{1}{2}\bar {z}^2+\frac{1}{4} (z-\bar {z})^2+c} =c_0 \rme^{\frac{1}{4}(z^2-\bar {z}^2-2z\bar {z})+c}.
\end{equation*}
The consequence of the conditions~(1) and~(2) is that the peeling maps the vacuum $\phi_{0}$, which is killed by both the left and right annihilation operators to the function identically equal to $c_0$, which is killed by both $\partial_{z}$ and $\partial_{\bar{z}}$. The peeled representation $\tilde{\Lambda}_{\myhbar}$ is
\begin{align}
 [\tilde{\Lambda}_{\myhbar}(s,z)F] (z')& := \rme^{\frac{1}{4}(\bar{z'}^2-z'^2+2z'\bar{z'})}\Lambda_{\myhbar}(s,z)\rme^{-\frac{1}{4}(\bar{z'}^2-z'^2+2z'\bar{z'})}F(z') \nonumber\\
 &=\rme^{h\rmi s
 +\frac{1}{4}(\bar z^2-z^2-2z\bar z)+\bar z z' }F(z'-z) ,\label{FSB-Rep}
 \end{align}
which is called the \emph{FSB representation}.
The composition of the peeling~\eqref{unique-FSB-peeling} with the covariant transform~\eqref{eq:prefsb-transform} is
 \begin{align}
 F(z)&=\left(\frac{\myhbar}{\kappa}\right)^{1/2}\rme^{\frac{h}{4\kappa}(x^2+\kappa^2 y^2-2\rmi\kappa xy)-c }\int_{\Space{R}{}} f(t) \rme^{2\pi\rmi\myhbar t y}\, \rme^{-\frac{\pi \myhbar}{\kappa} (t-x)^2}\,\rmd t\nonumber\\
 &=\left(\frac{\myhbar}{\kappa}\right)^{1/2}\int_{\Space{R}{}} f(t) \rme^{-\frac{h}{2\kappa} t^2+\sqrt{\frac{2h}{\kappa}} t z-\frac{1}{2}z^2}\,\rmd t,\label{FSB-transform}
	\end{align}
where $F(z)$ is an analytic function of $z=\sqrt{\frac{h}{2\kappa}}(x+\rmi \kappa y)$.
Indeed, by Proposition~\ref{co:cauchy-riemann-integ} and the intertwining property~\eqref{intertwining with lie derivative}, the function $F(z)$~\eqref{FSB-transform}
satisfies
\begin{equation*}
 \partial_{\bar{z}} F(z) = (\kappa\partial_x+\rmi\partial_y)F(x,y)=0 ,
\end{equation*}
which is essentially the Cauchy--Riemann equation.
The integral~\eqref{FSB-transform} is known as the \emph{FSB transform}.
The image $\FSpace[\myhbar]{F}{2}$ of the FSB transforms is called the FSB space.
 It is a closed subspace of
 \begin{equation*}
 \FSpace{L}{2}\big(\Space{R}{2},\rme^{-\frac{h}{2\kappa}(x^2+\kappa^2 y^2)+2 c}\rmd x\,\rmd y\big)=\FSpace{L}{2}\big(\Space{C}{},\rme^{-|z|^2+2 c}\rmd z\,\rmd\bar z\big)	
 \end{equation*}
consisting of the analytic functions. Note that often, only the values $\myhbar=1$, $\kappa=1$ and $c=0$ are used~\cite{Vasilevski99}.

Similar to the pre-FSB transform, we calculate the composition of the pre-theta transform~$\tilde f$ \eqref{theta-transform-1} and the peeling~\eqref{unique-FSB-peeling}, $h=2\pi m$, as follows
 \begin{align*}
 \tilde F(z)&=\rme^{\frac{h}{4\kappa}(x^2+\kappa^2 y^2-2\rmi\kappa xy)-c } \tilde f(x,y)\\
 &=\int_{\Space{T}{2}}f(u,v) \rme^{-\frac{\pi m}{\kappa}(u^2+2\rmi\kappa uv)-\frac{1}{2} z^2+2\sqrt{\frac{\pi m}{\kappa}} u z-c} \bar\Theta_{m\kappa}(-\rmi (\bar{z'}-\bar{z}),\rmi)\,\rmd u\,\rmd v ,\label{FSB-transform-theta}
 \end{align*}
where $z=\sqrt{\frac{h}{2\kappa}}(x+\rmi \kappa y)\in\Space{C}{}$ as before.

We call $\tilde F(z)$ the \emph{theta transform} of $f(u,v)$. 	
As with the FSB transform, the intertwining property~\eqref{intertwining with lie derivative} implies that the image of the theta transform consists of analytic functions, which can be found in many references \cite{Berndt07a, Folland89,Neretin11a}.

\subsection{Peeling the Schr\"odinger representation}\label{peeling shrodinger}
In this subsection, we are performing less common peeling the Schr\"odinger representa\-tion $\uir{}{\myhbar}$~\eqref{eq:H1-schroedinger-rep}, so that the corresponding annihilation operator will only be the derivative $\partial_t$. For a representation space realised as a function on a set, we have the following simple result.
 \begin{lem}\label{le:peeling-vacuum}
 Let the annihilation operator have the form $a^{-}= M+D$ on $\FSpace{L}{2}(X)$, where~$M$ is a~multiplication operator and $D$ is a~derivative satisfying the Leibniz rule. For the vacuum vector $\phi(x)$, such that $a^{-} \phi(x)= 0 $, the peeling $f(x) \mapsto \phi(x) f(x)$ intertwines the annihilation operator~$a^{-}$ with the derivative~$D$.
 \end{lem}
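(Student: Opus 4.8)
The plan is to verify the claimed intertwining property directly by applying the peeled annihilation operator to an arbitrary test function and observing that the multiplication part cancels. Concretely, the peeling is the operator $\varepsilon_\phi\colon f(x)\mapsto \phi(x)f(x)$, and its inverse is division by $\phi$ (well-defined because the vacuum $\phi$ is nowhere zero on $X$, a fact I would note as a standing assumption). The intertwining statement ``$\varepsilon_\phi$ intertwines $a^-$ with $D$'' means $a^-\circ\varepsilon_\phi=\varepsilon_\phi\circ D$, equivalently $a^-(\phi f)=\phi\,(Df)$ for all $f$ in a suitable dense domain.

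First I would write out the left-hand side using $a^-=M+D$, so that $a^-(\phi f)=M(\phi f)+D(\phi f)$. Since $M$ is a multiplication operator, $M(\phi f)=(M\phi)\,f$ pointwise. Second, I would apply the Leibniz rule to the derivative term: $D(\phi f)=(D\phi)\,f+\phi\,(Df)$. Combining these two steps gives
\begin{equation*}
 a^-(\phi f)=(M\phi)\,f+(D\phi)\,f+\phi\,(Df)=\big((M+D)\phi\big)f+\phi\,(Df)=\big(a^-\phi\big)f+\phi\,(Df).
\end{equation*}
The third and decisive step is to invoke the vacuum condition $a^-\phi=0$ from the hypothesis, which annihilates the first summand and leaves exactly $a^-(\phi f)=\phi\,(Df)=\varepsilon_\phi(Df)$, as required.

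The main subtlety — rather than a genuine obstacle — is the handling of the multiplication operator $M$. The identity $M(\phi f)=(M\phi)f$ is immediate only when $M$ is multiplication by a function, which is precisely the hypothesis; I would state this explicitly so that the term $(M\phi)f$ legitimately recombines with $(D\phi)f$ into $(a^-\phi)f$. The other point to keep clean is that the Leibniz rule is assumed for $D$, so the product-rule splitting is licensed by hypothesis rather than needing justification. I expect no analytic difficulty here, since the computation is purely algebraic once one works on the dense domain of smooth compactly supported (or Schwartz) functions where $\phi f$ and its $D$-derivative are well defined; the unitarity of $\varepsilon_\phi$ for the matched measures, already recorded in the definition of peeling, then promotes the identity to the full Hilbert space by density.
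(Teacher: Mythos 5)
Your computation is correct and is exactly the intended argument: the paper states this lemma without proof, treating the identity $a^-(\phi f)=(a^-\phi)f+\phi\,(Df)=\phi\,(Df)$ as an immediate consequence of the Leibniz rule and the vacuum condition, and your reading of the intertwining direction ($a^-\circ\varepsilon_\phi=\varepsilon_\phi\circ D$) is the one consistent with the paper's subsequent use, where the \emph{inverse} of the applied peeling is the vacuum. The remarks about non-vanishing of $\phi$ and the dense domain are sensible housekeeping but do not change the substance.
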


Recall the annihilation operator
$a^-_{\uir{}{\myhbar}}=\rmd\uir{\kappa X-\rmi Y}{\myhbar}=-2\pi\myhbar t-\kappa\partial_t$~\eqref{annihilation-schrodinger} for the Schr\"odinger representation.
By Lemma~\ref{le:peeling-vacuum}, a function $\varepsilon_d$ which intertwines $a^-_{\uir{}{\myhbar}}$ with the plain derivative~$\kappa \partial_t$ shall satisfy the vacuum condition: $a^-_{\uir{}{\myhbar}} \varepsilon_d^{-1} = 0$. Thus, the peeling defined by $\varepsilon_d(t)=c \rme^{\frac{\pi\myhbar t^2}{\kappa}}$, $c\in \Space{C}{}$ is unitary:
 \begin{equation*}
 	\varepsilon_d\colon \ \FSpace{L}{2}(\Space{R}{},\rmd t)\longrightarrow \FSpace{L}{2}\big(\Space{R}{},\rme^{-2\frac{\pi\myhbar}{\kappa} t^2}\rmd t\big).
 \end{equation*}
 The composition of the Schr\"odinger representation with the peeling acting on $\FSpace{L}{2}\big(\Space{R}{},\rme^{-2\frac{\pi\myhbar}{\kappa} t^2}\rmd t\big)$ is
 \begin{equation*}
 [\tilde\rho_{\myhbar}(s,x,y)F](t) :=
\rme^{\frac{\pi\myhbar}{\kappa} t^2}\rme^{2\pi \rmi \myhbar
 (s-t y)}\rme^{-\frac{\pi\myhbar}{\kappa} (t-x)^2}F(t-x)
 =\rme^{2\pi \rmi \myhbar s}\rme^{-\frac{\pi\myhbar}{\kappa} (x^2-2t(x-\rmi\kappa y))}F(t-x) .
 \end{equation*}
Consequently, the corresponding derived representations of the Lie algebra $\algebra{h}_1$ are
 \begin{gather*}
 	\rmd\tilde\rho_{\myhbar}^{ X}=2\pi\myhbar t-\partial_t\qquad \text{and} \qquad\rmd\tilde\rho_{\myhbar}^{Y}=-2\pi\rmi\myhbar t,
 \end{gather*}
and the annihilation operator $a^-_{\tilde\rho_{\myhbar}}$ is
\begin{equation*}
	a^-_{\tilde\rho_{\myhbar}}=\rmd\tilde\rho_{\myhbar}^{\kappa X-\rmi Y}=-\partial_t,
\end{equation*}
which annihilates the vacuum $\tilde\phi(t)=c$.
A notable consequence is that $\varepsilon_d$ transforms the Hermite functions $H_n(t)\rme^{-\frac{\pi\myhbar}{\kappa} t^2}\in\FSpace{L}{2}(\Space{R}{})$ to the corresponding \emph{Hermite polynomials} $H_n(t) \in \FSpace{L}{2}\big(\Space{R}{},\rme^{-2\frac{\pi\myhbar}{\kappa} t^2}\rmd t\big)$.

\subsection{Peeling the lattice representation}\label{sub: peeling-Lattice}

The purpose of peeling the lattice representation $\uir{}{m}$~\eqref{induced-smooth-rep-from-H_d} seems to be difficult to formulate while staying within the framework of the Heisenberg group itself. However, the lattice representation is better understood if it is extended to the Schr\"odinger group (aka the Jacobi group~\cite[Chapter~9]{Berndt07a})~-- the semi-direct product of $\Space[p]{H}{1}$ with the group $\SL$ acting on $\Space[p]{H}{1}$ through symplectic automorphism~\cite[Section~1.2]{Folland89}. The representation of the Schr\"odinger group can then be peeled in a way that its vacuum will be a null-solution of the heat equation.

Since an accurate treatment of the Schr\"odinger group is beyond the scoop of the present paper, we simply provide the form of the required peeling, cf.~\eqref{calculating-theta-vacuum}:
\begin{equation*}
 \varepsilon_d\cdot I = \rme^{-\pi\kappa (3\omega^2-\bar\omega^2-2\omega\bar{\omega})/4m}\cdot I, \qquad
 \text{where} \quad \omega = \frac{m}{\kappa}(\kappa v+\rmi u).
\end{equation*}
Then, $\varepsilon_d $ is a unitary operator:
 \begin{equation*}
 \varepsilon_d\colon \ \FSpace{L}{2}\big(\Space{T}{2}, \rmd\omega \rmd\bar{\omega}\big) \longrightarrow \FSpace{\hat L}{2}\big(\Space{T}{2}, \rme^{\pi\kappa(\omega- \bar{\omega})^2/2m} \rmd\omega \rmd\bar{\omega}\big).
 \end{equation*}
Note that $\pi\kappa(\omega -\bar{\omega})^2/2m=-2\operatorname{Re}\big[{-}\pi\kappa \big(3\omega^2-\bar\omega^2-2\omega\bar{\omega}\big)/4m\big]= -2\pi m u^2/\kappa$.
The corresponding irreducible lattice representation $\tilde\rho_m$ is
\begin{equation*}
 [\tilde\rho_m (s,\omega)F](\omega',\bar{\omega'})
 =\rme^{2\pi m\rmi s} \rme^{\pi\kappa(\frac{1}{4}(\omega -\bar{\omega})^2+\omega(\bar{\omega'}-\omega'))/m}F(\omega'-\omega, \bar{\omega'}-\bar{\omega}) .
\end{equation*}
Therefore, the corresponding annihilation operator is simply
\begin{equation*}
	a^-_{{\tilde\rho}_{m}}= \rmd{\tilde\rho_m}^{\kappa X-\rmi Y}=2\rmi m \partial_{\bar{\omega}},
\end{equation*}
 which annihilates the theta function $\tilde\Phi=:c \Theta_{m\kappa}\big(\omega,\frac{\rmi m}{\kappa}\big)$.
Therefore, the peeling maps the vacuum vector $\Phi_{m\kappa}$ from Section~\ref{integration-transform-theta} to the theta function:
 \begin{equation*}
 \varepsilon_d\colon \ \Phi(\omega,\bar \omega) \mapsto
 c\Theta_{m\kappa}\left(\omega,\frac{\rmi m}{\kappa}\right).
 \end{equation*}

\section[A contravariant transform on H\_p\textasciicircum{}1]{A contravariant transform on $\boldsymbol{\Space[p]{H}{1}}$}\label{Contravariant Transform}

The goal of this section is to introduce the \emph{contravariant transform} $\oper{M}_\psi$, which is the \emph{adjoint} of the covariant transform $\oper{W}_\phi$ (see \cite[Section~8.1]{AliAntGaz14a} and \cite[Section~2]{Kisil98a}).

\subsection{A Contravariant transform for induced representations}\label{sec:contr-transf-induc}

Let $H$ be a closed subgroup of $\Space[p]{H}{1}$ and $X=\Space[p]{H}{1}/H$ be the respective homogeneous space, which is a subset of Euclidean space with the respective Lebesgue measure.
 Let $\uir{}{}$ be a representation of the Heisenberg group $\Space[p]{H}{1}$ on the vector space $\FSpace{V}{}$.
 \begin{defn} \label{de:contravariant-tran}
 For a \emph{reconstruction vector} $\psi\in \FSpace{V}{}$, the \emph{contravariant transform} $\oper{M}_\psi$
 is a map $\oper{M}_\psi^{\uir{}{}}\colon \FSpace{L}{1}(X) \rightarrow \FSpace{V}{}$ given by
 \begin{equation*}
 \oper{M}_\psi^{\uir{}{}}\colon \ \tilde \nu \mapsto \oper{M}_\psi(\tilde \nu )
 =\int_{X}\tilde \nu (x) \uir{}{}(\map{s}(x)) \psi\,\rmd\mu(x),
 \end{equation*}
 where $\map{s}$ is a~Borel section from $X=\Space[p]{H}{1}/H$ to $\Space[p]{H}{1}$.
 \end{defn}

It is naturally to request that the contravariant transform $\oper{M}_\psi^{\uir{}{}}$ shall intertwine the induced representation $\uir{}{\chi}$~\eqref{eq:def-ind} and the representation~$\uir{}{}$:
\begin{equation} \label{eq:contravariant-intertwining}
 \oper{M}_\psi^{\uir{}{}} \circ \uir{}{\chi}(g) = \uir{}{}(g) \circ \oper{M}_\psi^{\uir{}{}}, \qquad \text{for all} \quad g \in \Space[p]{H}{1}.
\end{equation}
The left-hand side explicitly is
\begin{align}
 \nonumber
 [\oper{M}_\psi^{\uir{}{}} \circ \uir{}{\chi}(g) \tilde{\nu}]
 &= \int_{X} \bar{\chi}\big(\map{r}\big(g^{-1} * \map{s}(x)\big)\big)
 \tilde{\nu} \big(g^{-1}\cdot x\big) \uir{}{}(\map{s}(x)) \psi\,\rmd\mu(x)\\
 \nonumber
 &= \int_{X}
 \tilde{\nu} (y) \bar{\chi}(\map{r}(g * \map{s}(y))) \uir{}{} \big(g*\map{s}(y)*(\map{r}(g*\map{s}(y)))^{-1}\big) \psi\,\rmd\mu(x)\\
 \nonumber
 &= \int_{X}
 \tilde{\nu} (y) \bar{\chi}(\map{r}(g * \map{s}(y))) \uir{}{}\left(\map{s}(\map{p}(g*\map{s}(y)))\right) \psi\,\rmd\mu(x)\,,
\end{align}
where $y = g^{-1}\cdot x$ for $x$, $y\in X$ and $g\in G$. We used the following identities:
\begin{gather*}
 \map{s}(x) = g*\map{s}(y)*(\map{r}(g*\map{s}(y)))^{-1},\qquad
 \map{r}\big(g^{-1}*\map{s}(x)\big) = \map{r}(g*\map{s}(y)).
\end{gather*}
The right-hand side of~\eqref{eq:contravariant-intertwining} is
\[
\big[\uir{}{}(g) \circ \oper{M}_\psi^{\uir{}{}} \tilde{\nu}\big] = \int_{X}\tilde \nu (y)\uir{}{}(g*\map{s}(y)) \psi\,\rmd\mu(y) .
\]
If the intertwining property holds for every function $\tilde{\nu}$, then the following condition for the reconstructing vector $\psi$ is required:
\begin{equation} \label{eq:condition-intertwining}
 \bar{\chi}(\map{r}(g))\uir{}{}\left(\map{s}(\map{p}(g))\right) \psi= \uir{}{}(g) \psi, \qquad \text{for all} \quad g \in G.
\end{equation}

The abstract framework of the contravariant transform is well known for a unitary irreducible representation $\uir{}{}$ in a Hilbert space $\FSpace{V}{}$ (see for example~\cite[Section~8.1]{AliAntGaz14a}).

For suitable fiducial $\phi$ and reconstruction $\psi\in \FSpace{V}{}$ vectors, the contravariant transform $\oper{M}_\psi$ and the covariant transform $\oper{W}_\phi$~\eqref{eq:induce-wavelet-transform-1} are adjoints.

There are several extensions of the constructions for a strongly continuous representation $\uir{}{}$ in a Banach space $\FSpace{V}{}$ \cite{Albargi15a, FeichGroech89a,FeichGroech89b,Kisil98a,Kisil13a}. In this paper we omit generalities and use specialised techniques for $\Space[p]{H}{1}$ \cite{Albargi15a, Kisil98a,Kisil13a}.
We use $\oper{W}^*\colon B^* \rightarrow W^*(X)$ and $\oper{M}^*\colon W^*(X) \rightarrow B^*$ to denote the adjoint operators to $\oper{W}$ and $\oper{M}$, respectively. This results in the following identity:
\begin{equation*}
	\scalar{\oper{M}v}{\oper{M}^*l}_{W(X)}=\scalar{v}{l}_{B},\qquad v\in B, \quad l\in B^*.
\end{equation*}

The contravariant transform construction is particularly simple for maps acting from the pre-FSB space as the next two examples show.

\begin{Example}[the inverse of the (pre-)FSB transform]
For the Schr\"odinger representation $\uir{}{\myhbar}$~\eqref{eq:H1-schroedinger-rep}, the intertwining condition~\eqref{eq:condition-intertwining} is trivially satisfied by an arbitrary reconstructing vector $\psi$. From the sesqui-unitary property~\eqref{generaL-sesqui-unitary}, it follows that $\oper{M}_{\psi}\circ \oper{W}_\phi = I$ for vectors~$\psi$ and~$\phi$ such that $\scalar{\phi}{\psi}=1$. In particular, for the Gaussian $\psi_{\myhbar\kappa}(t) =2^{1/4} \rme^{-\frac{\pi\myhbar}{\kappa} t^2}$ as a reconstruction vector, the contravariant transform is
\begin{align*}
 \big[\oper{M}_\psi^{\uir{}{\myhbar}}f\big](t)&= 2^{1/4} \int_{\Space{R}{2}}f(x,y) \rme^{-2\pi \rmi \myhbar ty} \rme^{-\frac{\pi\myhbar}{\kappa} (t-x)^2}\,\rmd x\,\rmd y \\
 &=2^{1/4}\rme^{-\frac{\pi\myhbar}{\kappa}t^2} \int_{\Space{R}{2}}f(x,y) \rme^{-\frac{\pi\myhbar}{\kappa} (x^2-2t(x-\rmi \kappa y))}\,\rmd x\,\rmd y.
 \end{align*}	 	
This is known as the \emph{inverse of the pre-FSB transform}~\cite[Section~4.2]{Neretin11a}.
\end{Example}

\begin{Example}[the inverse of the (pre-)theta transform]
 A contravariant transform $\oper{M}_{\psi_{\myhbar\kappa}}^{\uir{}{m}}$: $\FSpace{L}{2}\big(\Space{R}{2}\big)\rightarrow \FSpace{L}{2}\big(\Space{T}{2}\big)$ is similar to the previous case: there are no restrictions from the intertwining condition~\eqref{eq:condition-intertwining}, and the only requirement for a reconstruction vector $\psi$ is $\scalar{\phi}{\psi}=1$.
In particular, if we set the reconstruction vector $\psi$ by the lattice representation's vacuum~\eqref{calculating-theta-vacuum}
 the integral transformation $\oper{M}_\psi^{\uir{}{m}}$ becomes:
\begin{gather}
 \oper{M}_{\psi_{\myhbar\kappa}}^{\uir{}{m}}(f) = \rme^{-\frac{\pi m}{\kappa}u^2-2\rmi\kappa u v} \int_{\Space{T}{2}}f(x,y)\,\rme^{-\frac{\pi m}{\kappa}(x^2-2 u (x- \rmi\kappa y))}\nonumber\\
\hphantom{\oper{M}_{\psi_{\myhbar\kappa}}^{\uir{}{m}}(f) =}{}
 \times \Theta_{m\kappa}\left(\sqrt{\frac{h}{2\kappa}}(\kappa (v-y)+\rmi( u-x),\rmi\right)\,\rmd x\,\rmd y.\label{inver-theta-tra}
\end{gather}	
Thus, we obtain the inverse operator of $\oper{W}_{\Phi_{\myhbar\kappa}}^{\uir{}{m}}$~\eqref{theta-transform-1}. We call the transformation $\oper{M}_{\psi_{\myhbar\kappa}}^{\uir{}{m}}$~\eqref{inver-theta-tra} the inverse of the pre-theta transform.
\end{Example}
\begin{Example}[the Fourier transform]
 We can look for a contravariant transform which intertwines the two forms~\eqref{eq:H1-schroedinger-rep} and~\eqref{eq:H1-schroedinger-rep-alt} and will be an inverse of the covariant transform (the Fourier transform) from Example~\ref{ex:covariant-Fourier}. Using maps~\eqref{eq:maps-p-s-r-H_y-prime} and representation~\eqref{eq:H1-schroedinger-rep}, we obtain the form of the compatibility condition~\eqref{eq:condition-intertwining}:
 \[
 \rme^{2\pi \rmi \myhbar s} \rme^{-2\pi \rmi \myhbar ty} \psi(t) = \rme^{2\pi \rmi \myhbar (s-ty)} \psi(t-x),
 \]
 which again delivers the solution $\psi(\lambda)\equiv 1$. The respective contravariant transform is, as expected, the Fourier transform:
 \[
 \big[\oper{M}_{1}^{\uir{}{\myhbar}} f\big](t) = \int_{\Space{R}{}} f(\lambda) \rme^{-2\pi \rmi \myhbar t \lambda}\, \rmd \lambda .
 \]
\end{Example}
A bit more care is required in the next case.

\subsection{The inverse of the Zak transform}\label{inver-zak-contravariant-schro}

In Section~\ref{exp:calculating-Zak}, we derived the co-Zak transform ${\oper{Z}}\colon \FSpace{L}{2}(\Space{R}{})\rightarrow \FSpace{L}{2}\big(\Space{T}{2}\big)$~\eqref{eq:covariant-Zak1} through the induced covariant transform $\oper{W}_{\phi_0}^{\uir{}{\myhbar}}$.
 Now, we calculate its inverse using the contravariant transform.
The intertwining property~\eqref{eq:condition-intertwining} for the map $\map{r}(s,x,y)=(s-\{x\}[y],[x],[y])$ from~\eqref{eq:lattice-map-s-r} requires:
\begin{gather}
 \nonumber
 \bar{\chi}(s-\{x\}[y],[x],[y]) \uir{}{}\big((s,x,y)*(s-\{x\}[y],[x],[y])^{-1}\big) \psi= \uir{}{}(s,x,y) \psi , \\
\nonumber
 \bar{\chi}(s-\{x\}[y],[x],[y]) \uir{}{}\big((s,x,y)*(-s+x[y],-[x],-[y])\big) \psi = \uir{}{}(s,x,y) \psi, \\
\nonumber
 \bar{\chi}(s-\{x\}[y],[x],[y]) \uir{}{}(0,\{x\},\{y\}) \psi
 = \uir{}{}(s,x,y) \psi, \\
 \nonumber
 \rme^{2\pi \rmi m (s-\{x\}[y])} \rme^{2\pi \rmi m (-t\{y\})} \psi(t-\{x\})= \rme^{2\pi \rmi m (s-ty)}\, \psi(t-x) , \\
 \label{left}
 \rme^{2\pi \rmi m ((t-\{x\})[y])} \psi(t-\{x\}) = \psi(t-x) .
\end{gather}
Choosing $y=1$ and any $0<x<1$, we conclude that $\psi_0$ is supported at $\{0\}$. Further analysis shows that the reconstruction vector $\psi_0$ satisfying the condition~\eqref{left} is the Dirac delta distribution $\delta(t)$ (up to a multiple).

 Let $\tilde x=(u,v)\in X=\Space{T}{2}=\Space[p]{H}{1}/H_d$, where $H_d$ is the non-commutative subgroup~\eqref{non-commuta}. For the section map
 $\map{s}(u,v) = (0,u,v)$~\eqref{eq:lattice-map-s-r} and $g\in\FSpace{L}{2}\big(\Space{T}{2}\big)$, the contravariant transform becomes:
 \begin{align}
 \oper{M}_{\psi_0}^{\uir{}{\myhbar }}\colon \ g \mapsto & \int_{\Space{T}{}}\int_{\Space{T}{}}g(u,v) \uir{}{\myhbar}(\map{s}(\tilde x) ) \psi_0(t)\,\rmd u\,\rmd v\nonumber\\
 &=\int_{0}^1\int_{0}^1g(u,v) \uir{}{\myhbar}(0,u,v) \psi_0(t)\,\rmd u\,\rmd v\nonumber\\
 &=\int_{0}^1\int_{0}^1g(u,v) \rme^{-2\pi\rmi m t v}\delta(t-u)\,\rmd u\,\rmd v\nonumber\\
 &=\int_{0}^1g(t,v) \rme^{-2\pi\rmi m t v}\,\rmd v\nonumber\\
 &= \int_0^1 \tilde{g}(t,v)\,\rmd v. \label{eq:invers-zak1}
 \end{align}
 Since $g(t,v)$ is contained in the space $ \FSpace{L}{2}\big(\Space{T}{2}\big)$ of square-integrable functions that are periodic in $t$ and quasi-periodic in~$v$, multiplying $g(t,v)$ by $\rme^{-2\pi\rmi m t v}$ produces a function that has the same double quasi-periodicity property of $g(t,v)$ but in the opposite way. In other words, $\tilde{g}(t,v)=g(t,v)\cdot\rme^{-2\pi\rmi m t v}\in\FSpace{\tilde L}{2}\big(\Space{T}{2}\big)$ is quasi-periodic in $t$ and periodic in $v$ and square-integrable. Moreover, since $t\in \Space{R}{}\approx [0,1]\times\Space{Z}{}$, then $t=x+n$, for some $x\in [0,1]$ and $n\in\Space{Z}{}$.
 	Therefore, for $t=x+n$,~\eqref{eq:invers-zak1} becomes
 	\begin{equation}\label{eq:invers-zak}
 		\int_{0}^1\tilde{g}(x+n,v)\,\rmd v =\int_{0}^1\tilde{g}(x,v) \rme^{-2\pi\rmi m n v}\,\rmd v =\big[\oper{M}_{\psi_0}^{\uir{}{\myhbar }} g\big](x)=\big[\oper{M}_{\psi_0}^{\uir{}{\myhbar }} \big(\rme^{2\pi\rmi m x v} \tilde g\big)\big](x).
 	\end{equation}
 Thus, $\oper{M}_{\psi_0}^{\uir{}{\myhbar }}$ is the inverse of the induced covariant transform $\oper{W}_{\phi_0}^{\uir{}{\myhbar }}$~\eqref{eq:covariant-Zak1} from $ \FSpace{L}{2}\big(\Space{T}{2}\big)$ into $ \FSpace{L}{2}(\Space{R}{})$.

 We now provide the standard definition of the inverse of the Zak transform.
Let $\FSpace{\tilde L}{2}\big(\Space{T}{2}\big)$ be a space of square-integrable functions $\tilde g(x,v)$ that are quasi-periodic in $x$ and periodic in~$v$. Let $\tilde g= \oper{\tilde Z}f$ be the Zak transform of $f\in \oper{S}(\Space{R}{})\subset \FSpace{L}{2}(\Space{R}{})$. The function $f$ can be reconstructed using the following formula:
 	\begin{align}
 \oper{\tilde Z}^{-1}\colon \ & \FSpace{\tilde L}{2}\big(\Space{T}{2}\big)\rightarrow \FSpace{L}{2}(\Space{R}{}),\nonumber\\
 & \big[\oper{\tilde Z}^{-1}\tilde g\big](x)=\int_{\Space{T}{}}\tilde g(x,v)\rme^{-2\pi\rmi m n v}\,\rmd v,\qquad
 		n\in \Space{Z}{},\quad m\in \Space{N}{}.\label{inversion of the Zak}
 	\end{align}

 \begin{defn}[{\cite[Section~8.1]{Neretin11a}}]\label{the inverse of the Zak transform}
The operator $\oper{\tilde Z}^{-1}$~\eqref{inversion of the Zak}	is called {\em the inverse of the Zak transform}.
\end{defn}
The computations of this subsection allows us to present~$\oper{\tilde Z}^{-1}$ as a contravariant transform.
\begin{thm}
 \label{th:inverse-Zak-contravar}
 Let $\tilde{g}(x,v)=g(x,v)\cdot\rme^{-2\pi\rmi m x v}$ such that $g\in\FSpace{L}{2}\big(\Space{T}{2}\big)$. The contravariant transform~\eqref{eq:invers-zak},
 	 \begin{equation*}
 	 	\big[\oper{M}_{\psi_0}^{\uir{}{\myhbar }} \big(\rme^{2\pi\rmi m x v} \tilde g\big)\big](x)=\int_{0}^1\tilde{g}(x,v) \rme^{-2\pi \rmi m n v}\,\rmd v= \big[\oper{\tilde Z}^{-1}\tilde g\big](x),
 \end{equation*}
 is the inverse of the Zak transform. For $g\in\FSpace{L}{2}\big(\Space{T}{2}\big)$, one can write the inverse of the co-Zak transform as $\oper{Z}^{-1}g=\oper{\tilde Z}^{-1} \rme^{-2\pi \rmi m u v}g $.
 \end{thm}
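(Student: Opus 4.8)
The plan is to read this statement as an \emph{identification} of the already-computed contravariant transform with the classical inverse Zak transform, rather than as a fresh calculation: the analytic work is contained in~\eqref{eq:invers-zak1}--\eqref{eq:invers-zak}, and the definition~\eqref{inversion of the Zak} supplies the object to match. Accordingly I would organise the argument in three short movements --- evaluate $\oper{M}_{\psi_0}^{\uir{}{\myhbar}}$ explicitly, match it termwise with $\oper{\tilde Z}^{-1}$, and confirm the inversion by a representation-theoretic normalisation --- followed by a one-line substitution for the second assertion.

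First I would recall that the reconstruction vector dictated by the intertwining condition~\eqref{eq:condition-intertwining} is $\psi_0=\delta$. Feeding this into Definition~\ref{de:contravariant-tran} with the section $\map{s}(u,v)=(0,u,v)$ and the Schr\"odinger action $[\uir{}{\myhbar}(0,u,v)\psi_0](t)=\rme^{-2\pi\rmi m tv}\delta(t-u)$, the sifting property of the Dirac delta collapses the $u$-integration and produces $\big[\oper{M}_{\psi_0}^{\uir{}{\myhbar}}g\big](t)=\int_0^1 g(t,v)\,\rme^{-2\pi\rmi m tv}\,\rmd v=\int_0^1\tilde g(t,v)\,\rmd v$, which is precisely~\eqref{eq:invers-zak1}. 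I would then split the line variable as $t=x+n$ with $x\in[0,1)$, $n\in\Space{Z}{}$, and use the quasi-periodicity of $\tilde g\in\FSpace{\tilde L}{2}(\Space{T}{2})$ in its first slot to pull out the phase, giving $\int_0^1\tilde g(x+n,v)\,\rmd v=\int_0^1\tilde g(x,v)\,\rme^{-2\pi\rmi m nv}\,\rmd v$, i.e.\ the chain~\eqref{eq:invers-zak}. Comparing the right-hand side with~\eqref{inversion of the Zak} yields the two displayed equalities of the theorem and thereby exhibits $\oper{M}_{\psi_0}^{\uir{}{\myhbar}}\big(\rme^{2\pi\rmi m xv}\tilde g\big)=\oper{\tilde Z}^{-1}\tilde g$.

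To certify that this operator genuinely inverts the co-Zak transform, I would repeat the Schur's lemma argument of Corollary~\ref{ZAk-proper}(2): $\oper{Z}=\oper{W}_{l_0}^{\uir{}{\myhbar}}$ intertwines the irreducible Schr\"odinger and lattice representations, while $\oper{M}_{\psi_0}^{\uir{}{\myhbar}}$ intertwines them in the reverse direction by~\eqref{eq:contravariant-intertwining}; hence $\oper{M}_{\psi_0}^{\uir{}{\myhbar}}\circ\oper{Z}$ is a scalar on the irreducible space $\FSpace{L}{2}(\Space{R}{})$, and a single normalisation check (e.g.\ on $\chi_{[0,1]}$, as used for unitarity) fixes that scalar to $1$. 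The second assertion is then immediate: setting $\tilde g=\rme^{-2\pi\rmi m uv}g$ turns the identity $\oper{M}_{\psi_0}^{\uir{}{\myhbar}}=\oper{Z}^{-1}$ together with the just-proved $\oper{M}_{\psi_0}^{\uir{}{\myhbar}}\big(\rme^{2\pi\rmi m xv}\tilde g\big)=\oper{\tilde Z}^{-1}\tilde g$ into $\oper{Z}^{-1}g=\oper{\tilde Z}^{-1}\big(\rme^{-2\pi\rmi m uv}g\big)$.

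The hard part is not any one calculation but the rigorous status of the distributional reconstruction vector $\psi_0=\delta$: the integral in Definition~\ref{de:contravariant-tran} must be read in the pseudomeasure setting of the remark preceding Theorem~\ref{th:Zak}, so that the formal sifting of the delta is legitimate first on the dense subspace $\oper{S}(\Space{R}{})$ (equivalently $\FSpace{C}{c}(\Space{R}{})$) and only then extended to all of $\FSpace{L}{2}$ by the unitarity established in Corollary~\ref{ZAk-proper}. A secondary but essential piece of care is the bookkeeping between the line variable $t$ and the torus coordinate through $t=x+n$: one must track how the phase $\rme^{\pm2\pi\rmi m(\cdot)v}$ interchanges the two quasi-periodicity classes $\FSpace{L}{2}(\Space{T}{2})$ and $\FSpace{\tilde L}{2}(\Space{T}{2})$ in order to land on exactly the exponent $\rme^{-2\pi\rmi m nv}$ appearing in~\eqref{inversion of the Zak}.
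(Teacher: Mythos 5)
Your proposal follows essentially the same route as the paper: the paper's ``proof'' of Theorem~\ref{th:inverse-Zak-contravar} is precisely the preceding computation in Section~\ref{inver-zak-contravariant-schro} --- deriving $\psi_0=\delta$ from the condition~\eqref{eq:condition-intertwining}, collapsing the $u$-integral by the sifting property to get~\eqref{eq:invers-zak1}, splitting $t=x+n$ and using quasi-periodicity to reach~\eqref{eq:invers-zak}, and matching with Definition~\ref{the inverse of the Zak transform}. Your extra Schur's-lemma normalisation step and the care about the distributional status of $\delta$ are sensible additions but do not change the argument, which is correct as you present it.
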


\section{Conclusion}

Our work in this paper illustrates the technique which allows to obtain co- and contra-variant transforms with desired properties. For the covariant transform, the fiducial vector needs to agree with Proposition~\ref{co:cauchy-riemann-integ}. An induced contravariant transform will be an intertwining operator with an induced representation if the reconstructing vector satisfies~\eqref{eq:condition-intertwining}. This approach is illustrated on various representations of the Heisenberg group, and produces an interpretation of the Zak transform and its inverse as induced co- and contra-variant transforms.
Furthermore, we used peeling operators to obtain the familiar representation spaces of analytic function space of analytic functions.

Of course, this approach is not limited to the illustrative example of the Heisenberg group and can be fruitfully applied in many other situations.

\subsection*{Acknowledgements}

We are grateful to anonymous referees for many useful comments and remarks. Their suggestions were used for the paper's improvements.

\pdfbookmark[1]{References}{ref}
\LastPageEnding

\end{document}